\newcommand{\actson}{\
					 \begin{xy}
						{\ar@{->}@/_{9pt}/(4.9,1.5);(5.5,-.6)} 
				          \end{xy}  
				     \ }
\newcommand{\bb}{\boldsymbol}
\newcommand{\m}{\mathbb}
\newcommand{\whector}[1]{\overset{\rightharpoonup}{#1}}
\let\phi\varphi
\let\theta\vartheta
\renewcommand{\thefootnote}{\arabic{footnote}}
\begin{document}				     
\bibliographystyle{plain}
\numberwithin{equation}{section}

\newcounter{thm}
\newcounter{lemma}
\newcounter{remark}
\numberwithin{remark}{subsection}
\newtheorem{theorem}{Theorem}[section]
\newtheorem{proposition}[theorem]{Proposition}
\newtheorem{lemma}[theorem]{Lemma}
\newtheorem{cor}[theorem]{Corollary}
\newtheorem{conjecture}[theorem]{Conjecture}

\theoremstyle{definition}
\newtheorem{definition}[theorem]{Definition}
\newtheorem{example}[theorem]{Example}
\newtheorem{xca}[theorem]{Exercise}
\newtheorem{problem}[theorem]{Problem}
\newtheorem{remark}[theorem]{Remark}
\newtheorem{properties}{Properties}

\long\def\symbolfootnote[#1]#2{\begingroup\def\thefootnote{\fnsymbol{footnote}}\footnote[#1]{#2}\endgroup}

\title{An Exact Expression for Photon Polarization in Kerr geometry}
\author{Anusar Farooqui\footnote{Department of Mathematics and Statistics, McGill University, 805 Sherbrooke Street West, Montreal QC H3A 0B9, Canada. Email: farooqui@math.mcgill.ca }, Niky Kamran\footnote{Department of Mathematics and Statistics, McGill University, 805 Sherbrooke Street West, Montreal QC H3A 0B9, Canada. Email: nkamran@math.mcgill.ca}, and Prakash Panangaden\footnote{School of Computer Science, McGill University, 3480 University Street, Montreal QC H3A 0E9, Canada. Email: prakash@cs.mcgill.ca}}
\date{\today}

\maketitle
\begin{abstract}

  We analyze the transformation of the polarization of a photon propagating
  along an arbitrary null geodesic in Kerr geometry. The motivation comes
  from the problem of an observer trying to communicate quantum information
  to another observer in Kerr spacetime by transmitting polarized photons.
  It is essential that the observers understand the relationship between
  their frames of reference and also know how the photon's polarization
  transforms as it travels through Kerr spacetime.  Existing methods to
  calculate the rotation of the photon polarization (Faraday rotation)
  depend on choices of coordinate systems, are algebraically complex and
  yield results only in the weak-field limit.

  We give a closed-form expression for a parallel propagated frame along an arbitrary
  null geodesic using Killing-Yano theory, and thereby solve the problem of
  parallel transport of the polarization vector in an intrinsic, 
  geometrically-motivated fashion. The symmetries of Kerr geometry are
  utilized to obtain a remarkably compact expression for the
  geometrically induced phase of the photon's polarization.  We show that
  this phase vanishes on the equatorial plane and the axis of symmetry.

\end{abstract}

\newpage
\tableofcontents{}
\section{Introduction}

In protocols for quantum communication~\cite{Nielsen00} most of the
attention has been focussed on quantum effects such as the problem of
coping with noise in the communication mechanism or preserving
entanglement. It is typically taken for granted that the participants in
the protocol share a frame of reference.  However, a closer analysis by
Bartlett et al.~\cite{Bartlett03,Bartlett07} has revealed the importance of
sharing a frame. They have even quantified the degree to which a partially
shared frame constitutes shared information. The present paper is
motivated by these considerations, however, we do not address the quantum
information-theoretic issues which would involve a study of the evolution
of the quantum state.

Instead, we isolate the \emph{classical geometric aspects} and study them in
Kerr geometry.  Specifically, we study how two participants in a
quantum communication protocol involving transmission of polarized photons
--- henceforth we will call them \emph{observers} --- could share a frame
in Kerr geometry and how the polarization of a linearly polarized photon
would transform as it travels from one observer to the other.  It is
crucial that this transformation reflect what would be \emph{seen by the
  observers}.  Furthermore, the quantity we report should be
\emph{intrinsic to the geometry of the spacetime} and not correspond to
some arbitrarily chosen coordinate system or frame.
 
We have two observers called Alice and Bob.  Alice sends a linearly polarized photon to Bob; she has
chosen the polarization vector to be at some angle in with respect to some
axes which she has chosen in the plane of polarization of the photon, which
is a $2$-plane orthogonal to the direction of propagation of the photon.
Bob receives this photon which has travelled through the Kerr spacetime to
reach him.  In order for Bob to measure the polarization of the photon and
know what angle Alice intended to communicate to him, he needs to know how
their frames correspond and how the photon polarization has been
transformed by the background geometry.  

In Minkowski geometry, the problem is straightforward. Since, the background geometry does not affect the polarization of the photon, one only needs to solve the problem of sharing frames. For a pair of observers who start at the same event with a known relation between their frames, one can Fermi-Walker transport their frames to determine how their frames relate at the point where photon transmission occurs.

At this level of generality the problem is intractable in the Kerr geometry since the Fermi-Walker transport of frames along general timelike curves in Kerr geometry is still an open problem. Note that we are not interested in obtaining reference frames per se. What is required for two observers to exchange quantum information using polarized photons is shared knowledge of basis vectors in which the measurement is performed. 

Thus, we seek an \emph{intrinsic, geometrically defined measurement basis} along specific trajectories. We show how Kerr geometry allows for such a protocol; one which simultaneously solves the problem of sharing frames and minimizes the informational requirement on the observers.

The gravitationally induced rotation of the polarization vector in Kerr
geometry has been investigated in the weak field limit by \cite{Skrotskii57},\cite{Plebanski60},\cite{Godfrey70},\cite{Fayos82}, \cite{Ishihara88},
\cite{Nouri-Zonoz99},\cite{Sereno04}.  Extant methods rely on the existence of the Walker-Penrose conserved quantity to solve the problem of parallel propagating the polarization vector along a null
geodesic~\cite{Chandrasekhar92}.  The estimates are difficult to reconcile because they do not take into account the role of reference frames. However, there is a virtual consensus that the acquired phase is zero in
Schwarzschild geometry.

More recently, Brodutch, Demarie and Terno \cite{Brodutch11} have chosen
observers equipped with an orthonormal frame, located at fixed values of
$(r,\theta,\phi)$ in Boyer-Lindquist coordinates.  They make a physically
motivated choice of basis vectors for the plane of polarization by
\emph{requiring} that the acquired phase be zero in the Schwarzschild
limit.  They find that the acquired phase on the equatorial place is zero
and argue that this is because motion on the equatorial plane in Kerr
geometry is qualitatively similar to Schwarzschild spacetime.  Though their
results only hold in the weak field limit, their paper represents a
significant advance in the field.

We take a geometric approach.  First, our choice of observers is
motivated by the intrinsic geometry of Kerr spacetime.  We choose a class
of observers whose 4-velocities are symmetric linear combinations of the
principal null directions of the Weyl tensor.  We show how this class of
observers is uniquely suited to analyze the behaviour of test particles
near the horizon.  Second, we endow these observers with a symmetric frame
by exploiting the existence of the involutive isometry obtained by
simultaneous time- and rotation-reversal of the Kerr black hole.  This
greatly simplifies our expressions.  Third, we use the existence of the
principal null directions and other special features of Kerr geometry
to fix the definition of the plane of polarization and of the measurement basis.  This measurement protocol is allowed by the
specific symmetry structure of Kerr geometry.  It is simply unavailable in
Minkowski spacetime where no direction is similarly privileged.  Fourth, we
use Killing-Yano theory to construct a parallel propagated frame along the
null geodesic (a problem first solved by Marck \cite{Marck83a} in a related but slightly different form), thereby reducing the transport problem to one of raising and
lowering frame indices.  This allows us to obtain a remarkably compact
exact expression for Faraday rotation everywhere in the zone of outer communication in Kerr spacetime. 

We proceed as follows.  Section 2 lays out the geometry and symmetries of
the Kerr solution, as well as describes the null geodesic equations.  The
construction of the parallel propagated frame is given in section 3.  In
section 4, we set out our choice of observers and the measurement protocol.
We then prove that there is no Faraday rotation for photons confined to the
equatorial plane and the axis of symmetry.  We show how this immediately
implies the vanishing of the acquired phase in Schwarzschild spacetime as
well.  Section 5 gives the derivation of the closed form expression for the
Faraday rotation.  In section 6, we discuss the plots of a few null geodesics (provided in Appendix B) and their associated Faraday rotation. We conclude with some remarks about the physical
significance of the results and some possible avenues for future work.

\section{Kerr geometry}\label{Kerr}

Our goal in this section is to recall some of the salient geometric
properties of the Kerr metric that will be used to calculate the Faraday
rotation undergone by the polarization vector of a photon.  (Throughout this
paper, a photon will be thought of as a classical zero rest mass particle
moving along an affinely parametrized null geodesic.) We shall see that the
remarkable symmetry and separability properties of Kerr geometry make
it possible to obtain an exact expression for the Faraday rotation, which
we will derive in Section~\ref{ExactFaraday} and will subsequently
interpret geometrically.

We begin by recalling that the Kerr metric is a two-parameter family of
solutions of the Einstein vacuum equations defined on the manifold
$M\equiv \mathbb{R}^{2}\times S^{2}$ and describing the outer geometry
of a rotating black hole in equilibrium.  In Boyer-Lindquist coordinates
$(x^{i})=(t, r, \mathit\theta, \phi)$ with $-\infty<t<+\infty$,
$r_{+}<r<+\infty$, $0\le\theta\le\pi$, $0\le\phi<2\pi$, the Kerr metric
takes the form
\begin{equation}
\label{}
\nonumber
 ds^{2}= \frac{\Delta}{\Sigma}\left(dt-a\sin^{2}\theta d\phi\right)^{2} - \frac\Sigma\Delta dr^{2} -\Sigma d\theta^{2} -\frac{\sin^{2}\theta}\Sigma\left(adt-\left(r^{2}+a^{2}\right)d\phi\right)^{2},
\end{equation}
with
\begin{equation}
\Sigma(r, \theta) = r^{2}+a^{2}\cos^{2}\theta, \quad
\Delta(r) =  r^{2}-2Mr+a^{2}.
\end{equation}
The parameters $M>0$ and $a\ge0$ labeling the solutions within the Kerr
family correspond respectively to the mass and angular momentum per unit
mass of the black hole, as measured from infinity.  We shall restrict our
attention to the \emph{non-extreme case} $M>a\ge0$, in which case the
function $\Delta(r)$ has two distinct zeros,  
\begin{equation}
\label{ }
 r_{\pm}=M\pm\sqrt{M^{2}-a^{2}},
\end{equation}
with $r_{+}$ corresponding to the lower limit of the range of the
Boyer-Lindquist coordinate $r$.  It is well-known that the Kerr metric can
be analytically continued across the hypersurfaces $r=r_{+}$ and $r=r_{-}$
in such a way that these become null hypersurfaces in the extended
manifold, corresponding respectively to the event and Cauchy horizons of
the Kerr black hole geometry.  We shall however be interested in the
region $r>r_{+}$, which describes the space-time geometry outside the event
horizon of a Kerr black hole with parameters $M$ and $a$.  

The Weyl conformal curvature tensor of the Kerr solution is of Petrov type
D, meaning that it admits a pair of repeated principal null
directions, each of which is defined up to multiplication by a non-zero
scalar function.  These repeated principal null directions give rise to null
congruences which are geodesic and shear-free as a consequence of the
Goldberg-Sachs Theorem.  We choose the scale factors in such a way that the
principal null directions are given by  
\begin{equation}\label{lsymm}
\bb\ell=\ell^{i}\frac{\partial}{\partial
  x^{i}}=\frac{1}{\sqrt{2\Sigma\Delta}}\left(\left(r^{2}+a^{2}\right)\frac{\partial}{\partial
    t}+\sqrt{\Delta}\frac{\partial}{\partial r}+a\frac{\partial}{\partial
    \varphi}\right), 
\end{equation}
and
\begin{equation}\label{nsymm}
\bb n=n^{i}\frac{\partial}{\partial
  x^{i}}=\frac{1}{\sqrt{2\Sigma\Delta}}\left(\left(r^{2}+a^{2}\right)\frac{\partial}{\partial
    t}-\sqrt{\Delta}\frac{\partial}{\partial r}+a\frac{\partial}{\partial
    \varphi}\right).  
\end{equation}
\\ \\
The choice of scale factors leading to (\ref{lsymm}) and (\ref{nsymm}) will be characterized geometrically through an involutive isometry admitted by the Kerr metric (see (\ref{invol})). In particular, the vector fields (\ref{lsymm}) and $(\ref{nsymm})$, which form part of the \emph{symmetric null frame} constructed by Debever et al. \cite{Debever79}, will play an important in the geometrical characterization of the class of observers that we shall consider in our calculation of the Faraday rotation.  

The Kerr metric enjoys remarkable symmetry properties which we will exploit
systematically in our calculation of the Faraday rotation and which we now
summarize.  

First of all, the Kerr metric admits a two-parameter Abelian isometry group
that acts orthogonally transitively on time-like orbits, meaning that the
orbits of the group action are time-like 2-surfaces with the property that
the distribution of 2-planes orthogonal to the orbits is integrable.  The
orthogonal transitivity is manifest in the Boyer-Lindquist coordinates
since the metric does not admit cross terms mixing the differentials
$dr,\,d\theta$ with the differentials $dt,\,d\phi$.  In Boyer-Lindquist
coordinates, the action of the continuous part of the isometry group is
generated by the flows of the pair of commuting Killing vector fields
$\partial_{t}$ and $\partial_{\phi}$, and thus given by 
\begin{equation}
\label{transl}
(t,r,\theta, \phi)\mapsto(t+c_{1}, r, \theta, \phi+c_{2}),
\end{equation}
where $c_{1},\,c_{2}$ are arbitrary real constants.  Furthermore, the
isometry group of the Kerr metric admits a discrete subgroup isomorphic to
$\mathbb{Z}_{2}$, whose action is not of the form (\ref{transl}).  More
precisely, we say following Carter's terminology \cite{Carter69} that the isometry group is
\emph{invertible}, meaning that at every $x\in M$, there exists a
(1,1)-tensor $L_{x}\in\text{End}(T_{x}M)$, which acts as an involutive
isometry of $(T_{x}M,g_{x})$ and is such that if $\mathcal{O}_{x}$
denotes the orbit of the isometry group through $x$, then 
\begin{equation}
\label{ }
L_{x}\vert_{(T_{x} \mathcal{O}_{x})^{\perp}}=\text{id}_{(T_{x}\mathcal{O}_{x})^{\perp}},
\end{equation} 
and for all $X_{x}\in(T_{x}\mathcal{O}_{x})^{\perp}$
\begin{equation}
\label{ }
L_{x}(X_{x})=-X_{x}.
\end{equation}
We remark that from a result of Carter \cite{Carter69}, it is known that if an isometry
group acts orthogonally transitively on non-null orbits then the action is
necessarily invertible\footnote{This result is not true if the orbits of
  the isometry group are null}.  In Boyer-Lindquist coordinates, the
involution is given by 
\begin{equation}
\label{involution}
L_{x}=f_{*}\vert_{x},
\end{equation}
where $f$ is the isometry given by 
\begin{equation}
\label{ }
(t,r,\theta, \phi)\mapsto(-t, r, \theta, -\phi).
\end{equation}
We will commit an abuse of notation and denote both $L_{x}=f_{*}\vert_{x}$
and the dual map $f^{*}\vert_{x}$ by $L$.  The involution $L$ will play a
key role in defining invariantly the class of observers and frames for
which the Faraday rotation will be computed.  \\
We shall work in a Newman-Penrose null coframe
\begin{equation}
\label{ }
\theta^{1}=n_{i}dx^{i}, \ \theta^{2}=\ell_{i}dx^{i}, \ \theta^{3}=-\bar m_{i}dx^{i}, \ \theta^{4}=-m_{i}dx^{i},
\end{equation}
in which the Kerr metric takes the form 
\begin{equation}
\label{ }
ds^{2}=2(\theta^{1}\theta^{2}-\theta^{3}\theta^{4}).
\end{equation}
Following the construction of Debever et al. \cite{Debever79}, this coframe is chosen such that 
 \begin{equation}
\label{invol}
L\theta^{1}=-\theta^{2}, \ L\theta^{2}=-\theta^{1}, \ L\theta^{3}=-\theta^{4}, \ L\theta^{4}=-\theta^{3}.
\end{equation}
Following \cite{Debever79}, we refer to this frame as the \emph{symmetric coframe}.  
Note that this last requirement eliminates the scaling freedom we would
have otherwise had in defining a null coframe adapted to the principal null directions of the Weyl tensor. The corresponding orthonormal symmetric coframe $(\bb\omega^{0},\bb\omega^{1},\bb\omega^{2},\bb\omega^{3})$ is then defined by 
\begin{equation}
\label{ }
\bb\omega^{0}=\frac{1}{\sqrt{2}}(\theta^{1}+\theta^{2}), \bb\omega^{1}=\frac{1}{\sqrt{2}}(\theta^{2}-\theta^{1}), \bb\omega^{2}=-\frac{1}{\sqrt{2}}(\theta^{3}+\theta^{4}), \bb\omega^{3}=\frac{1}{\sqrt{2}}(\theta^{3}-\theta^{4}),
\end{equation} 
and given in Boyer-Lindquist coordinates by
\begin{eqnarray}
\label{carter11}
\bb\omega^{0} & = & \sqrt{\frac{\Delta}{\Sigma}}\left(dt-a\sin^{2}\theta d\varphi\right),\\
\bb\omega^{1} & = &  \sqrt{\frac{\Sigma}{\Delta}}dr,\\
\bb\omega^{2} & = & \frac{\sin\theta}{\sqrt\Sigma}\left(a dt-(r^{2}+a^{2})d\varphi\right),\\
\label{carter12}
\bb\omega^{3} & = &  \sqrt\Sigma\ d\theta.
\end{eqnarray}
Throughout this paper, we shall reserve lower case Latin indices $a,b,c,\dots$ to denote components with respect to the orthonormal symmetric coframe (\ref{carter11})-(\ref{carter12}) and the orthonormal frame dual to it. We shall denote the flat spacetime metric used to raise and lower these orthonormal frame indices by $\eta$, where
\begin{equation}
\label{ }
\eta_{ab}=\eta^{ab}=\left[\begin{array}{cccc}1 &  &  &  \\ & -1 &  &  \\ &  & -1 &  \\ &  &  & -1\end{array}\right].
\end{equation}

It is well known that in addition to its two-parameter Abelian group of
isometries, the Kerr metric posesses further symmetries whose presence is
closely tied to the fact that all the known massless and massive wave
equations are separable in Boyer-Lindquist coordinates and, where
applicable, the symmetric frame.  The geometic object that generates all
these additional symmetries is a rank two \emph{Killing-Yano tensor}, that
is, a $(0,2)$ skew-symmetric tensor $(f_{ij})$ satisfying the Killing-Yano
equation 
\begin{equation}
\label{KYE}
\nabla_{i}f_{jk}+\nabla_{j}f_{ik}=0.
\end{equation}
In Boyer-Lindquist coordinates and in the symmetric orthonormal coframe, any rank 2 Killing-Yano tensor is a constant multiple of
\begin{equation}
\label{KYT}
\bb f:=\frac12 f_{ij}dx^{i}\wedge dx^{j}=-a\cos\theta\bb\omega^{0}\wedge\bb\omega^{1}+r\bb\omega^{2}\wedge\bb\omega^{3}
\end{equation}
The role played by this Killing-Yano tensor in the separability properties of the Kerr metric stems from the fact that it appears as a ``square root'' of the quadratic first integral discovered by Carter in his proof of the separability in Kerr geometry of the Hamilton-Jacobi equation for geodesics and the Klein-Gordon equation for massive scalar fields \cite{Carter68h}.  More precisely, the symmetric (0,2)-tensor $(K_{ij})$ defined by
\begin{equation}
K_{ij}=f_{ik}f^{k}_{{\phantom j}j},
\end{equation}
satisfies the Killing equation
\begin{equation}
\nabla_{i}K_{jk}+\nabla_{j}K_{ki}+\nabla_{k}K_{ij}=0,
\end{equation}
and therefore gives rise to a quadratic first integral for the geodesic flow in Kerr geometry first discovered by Carter \cite{Carter68g}
\begin{equation}
\label{fourthintegral}
\kappa=K^{ij}p_{i}p_{j}.
\end{equation}
This quadratic first integral exists in addition to the two linear first integrals arising from the presence of the two commuting Killing vector
fields $\partial_{t}$ and $\partial_{\phi}$ and therefore reduces the
integration of the geodesic flow to quadratures.  For the purposes of
calculating the Faraday rotation of a photon, we shall be interested in
affinely parametrized \emph{null} geodesics, for which the equations can be
written in first-order form as  
\begin{eqnarray}
\label{geodesiceqns1}
\dot r & = & \pm\frac{\sqrt{R}}{\Sigma}, \\
\dot\theta & = & \pm\frac{\sqrt{\Theta}}{\Sigma}, \\
\Sigma\Delta\dot t & = & E\left(\left(r^{2}+a^{2}\right)^{2}-\Delta a^{2}\sin^{2}\theta\right)-2Mra\Phi, \\
\label{geodesiceqns2}
\Sigma\Delta\dot\phi & = & 2MraE+\left(\Sigma-2Mr\right)\Phi/\sin^{2}\theta.
\end{eqnarray}
In these equations, the dot denotes the derivative with respect to an affine parameter $s$, the constant $E$ is the conserved momentum $p_{t}$ corresponding to the energy of zero rest-mass particle moving along the null geodesic, $\Phi$ is the conserved angular momentum $-p_{\phi}$ along the axis of symmetry of the Kerr black hole, $\kappa$ is Carter's fourth integral of motion given by (\ref{fourthintegral}), and
\begin{eqnarray}
\label{R}
R(r)&:=&\m{P}^{2}-\Delta\kappa, \\
\label{Theta}
\Theta(\theta)  & := & \kappa - \m{D}^{2},
\end{eqnarray}
where  
\begin{eqnarray}
\m{P}(r) &:=& E(r^{2}+a^{2})-a\Phi, \\
\m{D}(\theta)  & := & a\sin\theta E - \Phi/\sin\theta.
\end{eqnarray}

\section{Parallel-propagated frame along null geodesics}
By definition, the \emph{polarization 4-vector} $\bb\digamma$ of a photon is
a vector field along an affinely parametrized null geodesic $\gamma$ with tangent vector $\bb K$ that is both parallel propagated along $\gamma$ and orthogonal to $\bb K$, that is,
\begin{eqnarray}
\label{TransportLaw}
K^{i}\nabla_{i}\digamma^{j}=K^{a}\nabla_{a}\digamma^{b} & = & 0, \\
\label{OrthogonalityCondition}
K^{i}\digamma^{j}g_{ij}=K^{a}\digamma^{b}\eta_{ab} & = & 0. 
\end{eqnarray}
In order to solve this transport equations (\ref{TransportLaw}) and (\ref{OrthogonalityCondition}), we construct a frame that is
parallel propagated along an arbitrary null geodesic in Kerr geometry. 
We shall see that, just as in Marck's original construction
\cite{Marck83a}, the Killing-Yano tensor (\ref{KYE}) will play a key
role. 

We first recall that the two commuting Killing vectors admitted by the Kerr metric can be recovered from the Killing-Yano tensor (\ref{KYE}) using the Hodge duality operator.  Indeed, it follows from the defining equation (\ref{KYE}) for Killing-Yano tensors that the vector fields $\bb\xi$ and $\bb\zeta$ defined by 
\begin{equation}
\label{primary}
\xi^{i}:=\frac{1}{3}\nabla_{j}h^{ji}, \quad \zeta^{i}:=-K^{i}_{\phantom{a}j}\xi^{j},
\end{equation}
where $(h_{ij})$ denotes the Hodge dual of $(f_{ij})$, are Killing vector fields.  Explicitly, with the Killing-Yano tensor $(f_{ij})$ given by (\ref{KYT}), the Hodge dual $\bb h$ is given by 
\begin{equation}
\label{ }
\bb h= \frac12 h_{ij}dx^{i}\wedge dx^{j}=r\bb\omega^{o}\wedge\bb\omega^{1}+a\cos \theta\bb\omega^{2}\wedge \bb\omega^{3},
\end{equation}
and we have
\begin{equation}
\bb\xi=\partial_{t},\quad \bb\zeta=\partial_{\phi}.
\end{equation}
A parallel propagated frame along the null geodesics of the Kerr metric is now constucted as follows. We follow the construction of Kubiznak et al. \cite{Kubiznak09}. The relevant result is:
\begin{lemma}
\label{KYtheorem} Let $\gamma$ be an affinely parametrized null geodesic with tangent vector $\bb K$.  Let $\bb X$ be a vector field that is both parallel propagated along  $\gamma$ 
\begin{equation}
\label{ }
K^{i}\nabla_{i}X^{j}=0,
\end{equation}
and orthogonal to $\bb K$, 
\begin{equation}
\label{ }
 g_{ij}K^{i}X^{j}=0.
\end{equation}
Then, the vector field $\bb Y$ defined along $\gamma$ by
\begin{equation}
\label{KYvf}
Y^{i}=X^{j}h_{j}^{\phantom{i}i}+\beta_{\bb X} K^{i},
\end{equation}
where
\begin{equation}
\label{ }
\frac{d}{ds}\beta_{\bb X}=g_{kl}X^{k}\xi^{l},
\end{equation}
and $\frac{d}{ds}$ denotes differentiation with respect to an affine parameter $s$ along $\gamma$, and $\bb\xi$ is as defined by (\ref{primary}), is parallel propagated along $\gamma$.   
\end{lemma}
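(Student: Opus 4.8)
The plan is to verify directly that $\bb Y$ obeys the parallel transport equation, i.e.\ that $K^{m}\nabla_{m}Y^{i}=0$, by differentiating (\ref{KYvf}) along $\gamma$ and checking that every term cancels.

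First I would apply the Leibniz rule to $Y^{i}=X^{j}h_{j}^{\phantom{j}i}+\beta_{\bb X}K^{i}$. Because $\gamma$ is an affinely parametrized geodesic we have $K^{m}\nabla_{m}K^{i}=0$, and because $\bb X$ is parallel propagated we have $K^{m}\nabla_{m}X^{j}=0$; these two hypotheses respectively annihilate the derivative of $K^{i}$ in the second term and the derivative of $X^{j}$ in the first. Using also that the inverse metric is covariantly constant (so that it may be pulled through $\nabla_{m}$ when writing $h_{j}^{\phantom{j}i}=g^{ik}h_{jk}$), what remains is
\begin{equation}
\label{remainder}
K^{m}\nabla_{m}Y^{i}=g^{ik}\,X^{j}K^{m}\nabla_{m}h_{jk}+\left(\frac{d\beta_{\bb X}}{ds}\right)K^{i}.
\end{equation}

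The key ingredient is the differential identity satisfied by the Hodge dual $\bb h$ of the Killing-Yano tensor. Since $(f_{ij})$ satisfies (\ref{KYE}), the tensor $\nabla_{i}f_{jk}$ is totally antisymmetric, and Hodge-dualizing this statement shows that $\bb h$ is a \emph{closed} conformal Killing-Yano tensor, which in the present four-dimensional setting is equivalent to the pointwise identity
\begin{equation}
\label{cky}
\nabla_{a}h_{bc}=g_{ab}\xi_{c}-g_{ac}\xi_{b},
\end{equation}
in which $\bb\xi$ is precisely the Killing vector recovered in (\ref{primary}); indeed, contracting (\ref{cky}) on $a$ and $b$ gives $\nabla^{b}h_{bc}=3\xi_{c}$, which fixes both the sign and the factor $\frac13$ appearing in the definition of $\bb\xi$. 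Establishing (\ref{cky}) with the correct normalization is where the genuine content of the lemma lies, and I expect it to be the main obstacle: one must carefully track the action of the Hodge star on the totally antisymmetric $\nabla_{i}f_{jk}$ and confirm that the only surviving (trace) part is exactly $\bb\xi$.

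Granting (\ref{cky}), the remainder of the proof is a brief contraction. Substituting (\ref{cky}) into the bracketed factor of the first term of (\ref{remainder}) gives
\begin{equation}
X^{j}K^{m}\nabla_{m}h_{jk}=X^{j}\left(K_{j}\xi_{k}-K_{k}\xi_{j}\right)=\left(g_{jm}X^{j}K^{m}\right)\xi_{k}-\left(g_{jl}X^{j}\xi^{l}\right)K_{k}.
\end{equation}
The orthogonality hypothesis $g_{jm}K^{j}X^{m}=0$ kills the first term, so that, after contracting with the $g^{ik}$ in (\ref{remainder}), the first term of (\ref{remainder}) equals $-\left(g_{jl}X^{j}\xi^{l}\right)K^{i}$. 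On the other hand, the defining relation $\frac{d\beta_{\bb X}}{ds}=g_{kl}X^{k}\xi^{l}$ makes the second term of (\ref{remainder}) equal to $+\left(g_{kl}X^{k}\xi^{l}\right)K^{i}$. These two contributions are exact negatives of one another and cancel, yielding $K^{m}\nabla_{m}Y^{i}=0$. Hence $\bb Y$ is parallel propagated along $\gamma$, which is the assertion of the lemma.
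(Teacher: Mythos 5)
Your proof is correct: the direct verification via the closed conformal Killing--Yano identity $\nabla_{a}h_{bc}=g_{ab}\xi_{c}-g_{ac}\xi_{b}$ (whose trace fixes the factor $\tfrac13$ in the definition of $\bb\xi$), combined with the geodesic, parallel-transport and orthogonality hypotheses, is exactly the standard argument, and it is the one the paper itself defers to by citing Kubiznak et al.\ rather than proving the lemma. The only step you leave as a sketch is the derivation of the dual identity from the total antisymmetry of $\nabla_{i}f_{jk}$, which is routine and which you correctly identify and normalize.
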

We now consider an affinely parametrized arbitrary null geodesic $\gamma$ in the Kerr metric and construct a parallel propagated frame along $\gamma$ by repeated application of Lemma \ref{KYtheorem}. From now on, we will work exclusively in Carter's symmetric frame, defined as the orthonormal frame dual to the symmetric orthonormal coframe given by (\ref{carter11})-(\ref{carter12}). Vector fields will thus be identified with their components in the symmetric frame and will be represented as four-component row vectors.

Given an affinely parametrized null geodesic $\gamma$, it follows from (\ref{geodesiceqns1})-(\ref{geodesiceqns2}) that the tangent vector $\bb K=\dot\gamma$ is given by
\begin{equation}
\label{ }
\bb K = \frac1{\surd\Sigma}\left(\frac{\m P}{\surd\Delta}, \frac{\surd R}{\surd\Delta}, \m D, \surd\Theta\right).
\end{equation}
Likewise, the Killing vector field $\bb\xi=\partial_{t}$ is given by
\begin{equation}
\label{ }
\bb\xi = \frac1{\surd\Sigma}\left(\surd\Delta, 0, a\sin\theta, 0\right).
\end{equation}
Since $\bb K$ is both parallel propagated along $\gamma$ and null, we may apply Lemma \ref{KYtheorem} to obtain a vector field $\bb Y$ that is parallel propagated along $\gamma$.  We have
\begin{equation}
\label{ }
\frac{d}{ds}\beta_{\bb K}=\eta_{ab}K^{a}\xi^{b}=E,
\end{equation}
so that $\beta_{\bb K}=Es$
where $s$ is the affine parameter of the null geodesic.  We then immediately obtain using (\ref{KYvf}) that the vector field $\bb Y$ defined in the symmetric frame by 
\begin{equation}
\label{ }
\bb Y =\frac{1}{\sqrt{\kappa\Sigma}}\left(\frac{Es\m P-r\surd R}{\surd\Delta}, \frac{Es\surd R-r\m P}{\surd\Delta},Es\m D + a\cos\theta\surd\Theta, Es\surd\Theta-a\cos\theta \m D\right),
\end{equation}
is parallel propagated along $\gamma$. We now apply Lemma (\ref{KYtheorem}) to the vector field $Y$ and obtain an additional vector field $\bb X$ that is parallel propagated along $\gamma$.  We have 
\begin{equation}
\label{ }
\frac{d}{ds}\beta_{\bb Y}=\eta_{ab}Y^{a}\xi^{b}=\frac{E^{2}s-r\dot r-a^{2}\cos\theta\sin\theta\dot\theta}{\surd\kappa},
\end{equation}
where the dot denotes differentiation with respect to the affine parameter $s$, whence
\begin{equation}
\beta_{\bb Y}=\frac{{E^{2}s^{2}-r^{2}+a^{2}\cos^{2}\theta}}{2\surd\kappa}.
\end{equation}
We conclude then that 
\begin{align}
\label{ }
\nonumber
\bb X=&\frac1{\kappa\surd\Sigma}\Big(\frac{\m P\beta_{+}-rEs\surd R}{\surd\Delta}, \frac{\surd R\beta_{+}-rEs\m P}{\surd\Delta},\\
&\phantom{\frac1{\kappa\surd\Sigma}\Big(}\m D\beta_{-}+a\cos\theta Es\surd\Theta, \surd\Theta\beta_{-}-a\cos\theta Es\m D\Big),
\end{align}
where 
\begin{equation}
\label{ }
2\beta_{\pm}:=E^{2}s^{2}\pm\Sigma,
\end{equation}
is parallel propagated along $\gamma$.

Note that $\eta_{ab}X^{a}K^{b}=1$ so that $\bb X$ and $\bb K$ are not orthogonal. Thus, we cannot apply Lemma \ref{KYtheorem} to construct a fourth vector field that is parallel propagated along $\gamma$.  However, we can use the Killing-Yano tensor $\bb f$ directly to obtain another vector that is parallel propagated along the affinely parametrized null geodesic $\gamma$ with tangent vector $\bb K$.  Indeed, it follows immediately from the Killing-Yano equation (\ref{KYT}) that the vector field $\bb Z$ defined by 
\begin{equation}
\label{KYP}
Z^{a}=f^{a}_{\phantom{a} b}K^{b},
\end{equation}
is parallel propagated along $\gamma$.  We are of course free to scale $\bb Z$ by any constant, and we will choose this constant to be equal to $1\over {\sqrt{\kappa}}$ so as to simplify the orthogonality relations between the vector fields comprising the parallel propagated frame.  Applying (\ref{KYP}) and scaling $\bb Z$ as above, we obtain
\begin{equation}
\label{ }
\bb Z=\frac{1}{\sqrt{\kappa\Sigma}}\bigg(\frac{a\cos\theta\surd R}{\surd{\Delta}}, \frac{a\cos\theta\m P}{\surd{\Delta}}, r\surd\Theta,-r\m D\bigg).
\end{equation}
We thus have a frame $\{\bb K, \bb X, \bb Y, \bb Z\}$ that is parallel propagated along the affinely parametrized null geodesic $\gamma$ with tangent vector $\bb K$.  The matrix of scalar products for the elements of this frame is given by
\begin{equation}
\label{prodmat}
\left[\begin{array}{cccc} & 1 &  &  \\ 1 &  &  &  \\ &  & -1 &  \\ &  &  &- 1\end{array}\right].
\end{equation}
The polarization vector $\bb\digamma$ is orthogonal to $\bb K$. Since $\bb K$ is null and parallel propagated along itself, $\bb\digamma$ is only determined \emph{modulo} $\bb K$.  That is, if $\bb\digamma$ satisfies (\ref{TransportLaw}) and (\ref{OrthogonalityCondition}), then so does 
\begin{equation}
\label{Pfreedom}
\bb\digamma'=\bb\digamma+c\bb K,
\end{equation}  
where $c$ is a real constant. 
\begin{definition}[2-Plane of Polarization along $\gamma$]
We choose initial conditions such that $\bb\digamma_{\gamma(0)}\in$ span$\left\{\bb Y_{\gamma(0)},\bb Z_{\gamma(0)}\right\}$. Then, $\bb\digamma_{\gamma(s)}\in\text{span}\{\bb Y_{\gamma(s)}, \bb Z_{\gamma(s)}\}$ for all $s$, since $\bb\digamma$ has constant components in $\{\bb K, \bb X, \bb Y, \bb Z\}$. This defines the 2-plane of polarization 
\begin{equation}
\mathcal P_{\gamma(s)}:=\text{span}\left\{\bb Y_{\gamma(s)}, \bb Z_{\gamma(s)}\right\}\subset \langle \bb K\rangle^{\perp},
\end{equation}
 at each event $\gamma(s)\in M$. 
\end{definition}
In order to simplify the computation we define an orthonormal frame that is parallel propagated along the null geodesic $(\bb L_{(0)},\bb  L_{(1)},\bb L_{(2)},\bb L_{(3)})$ by a constant coefficient transformation of $\{\bb K, \bb X, \bb Y, \bb Z\}$ as follows.
\begin{equation}
\label{ }
\bb L_{(0)}:=\frac{1}{\sqrt2}(\bb K+\bb X), \,\bb  L_{(1)}:=\frac{1}{\sqrt2}(\bb K-\bb X), \ \bb  L_{(2)}:=\bb Y,\bb  L_{(3)}:=\bb Z.
\end{equation}
We note here the explicit expressions for the elements of frame $\bb L_{(a)}$ with respect to the symmetric frame.
\begin{eqnarray}
\nonumber
\bb L_{(0)}&=&\frac1{\kappa\sqrt{2\Sigma}}\bigg[\frac{\m P(\kappa+\beta_{+})-Es\surd R}{\surd\Delta}, \frac{\surd R(\kappa+\beta_{+})-Es\m P}{\surd\Delta},\\ 
\nonumber
&&\m D(\kappa+\beta_{-})+a\cos\theta Es\surd\Theta, \surd\Theta(\kappa+\beta_{-})-a\cos\theta Es\m D\bigg], \\
\nonumber
\bb L_{(1)}&=&\frac1{\kappa\sqrt{2\Sigma}}\bigg[\frac{\m P(\kappa-\beta_{+})+Es\surd R}{\surd\Delta},\frac{\surd R(\kappa-\beta_{+})+Es\m P}{\surd\Delta},\\ 
\nonumber
&&\m D(\kappa-\beta_{-})-a\cos\theta Es\surd\Theta,\surd\Theta(\kappa-\beta_{-})+a\cos\theta Es\m D\bigg], \\
\nonumber
\bb L_{(2)}&=&\frac1{\sqrt{\kappa\Sigma}}\bigg[\frac{Es\m P-r\surd R}{\surd\Delta},\frac{Es\surd R-r\m P}{\surd\Delta},Es\m D + a\cos\theta\surd\Theta,Es\surd\Theta-a\cos\theta \m D \bigg], \\
\label{ppframe}
\bb L_{(3)}&=&\frac1{\sqrt{\kappa\Sigma}}\bigg[\frac{a\cos\theta\surd R}{\surd{\Delta}},\frac{a\cos\theta\m P}{\surd{\Delta}},r\surd\Theta,-r\m D\bigg].
\end{eqnarray}

\section{Defining and measuring Faraday rotation}
In order to define the Faraday rotation we need to pin down the class of
observers who are involved in the communication protocol and specify the
frames with respect to which they are measuring the polarization.  We have
already seen that Carter's symmetric frame frame is closely tied to intrinsic
geometric properties of the Kerr metric.  This makes it an ideal candidate for formulating the
Faraday rotation in a geometrically meaningful fashion.

In the definition of Carter's null frame, which is dual
to the co-frame defined in (\ref{carter11})-(\ref{carter12}), the arbitrary scaling of the
vectors $\bb\ell$ and $\bb n$ has been fixed by the action of the involution.
Thus, one has a natural time-like vector field $\bb U$, namely  
\begin{equation}
\label{Observers}
\bb U:=\frac{1}{\sqrt{2}}(\bb\ell+\bb n)=\frac{1}{\sqrt{\Sigma\Delta}}\left(\left(r^{2}+a^{2}\right)
\frac{\partial}{\partial t}+a\frac{\partial}{\partial \varphi}\right), 
\end{equation}
where $\bb\ell$ and $\bb n$ are given by (\ref{lsymm}) and (\ref{nsymm}).  This
identifies a family of observers whose 4-velocities are a symmetric linear
combination of the principal null directions $\bb\ell$ and $\bb n$.  We call them
\emph{Carter observers}.  We choose to work with Carter observers because
it follows from the discussion of Section~\ref{Kerr} that they are defined
geometrically in terms of the principal null directions of the Weyl tensor
and the involution $L$.  They exist everywhere outside the event horizon
\emph{including} the region between the event horizon and the stationary limit surface
$r=r_{s}$ where the Killing vector field $\bb\xi=\partial_{t}$ becomes null.
Their coordinate angular velocity is $a/(r^{2}+a^{2})$, which is exactly
the coordinate angular velocity of the event horizon with $r=r_{+}$.
Therefore, this class of observers is uniquely suited to analyze the
behaviour of test particles near the horizon.

We choose the observers' frames to be duals of the symmetric coframe defined in equations (\ref{carter11})-(\ref{carter12}).  We shall see that since the symmetric frame is so well adapted to the geometry of Kerr spacetime, this choice will greatly simplify the computation and allow us to obtain a compact, closed form expression for the geometrically induced Faraday rotation of the polarization vector.

We decompose the tangent space $T_{x}M$ at any event $x$ along the
worldline of the observer with 4-velocity $\bb U$ into an orthogonal direct sum of spacelike and timelike
vector spaces in accordance with the observer's decomposition of spacetime
by projecting vectors onto the observer's frame at event $x$.  That is,  
\begin{equation}
\label{ }
T_{x}M=\langle\bb U\rangle\oplus\Sigma_{3},
\end{equation}
where $\Sigma_{3}:=\langle\bb U\rangle^{\perp}$.
In what follows, we shall supress the label $x$ for the event with the understanding
that this 1+3 decomposition is only valid at a given event.  
The observer's frame defines a projection map $\pi:T_{x}M\longrightarrow\Sigma_{3}$,
\begin{equation}
\label{proj}
\pi(\bb X)=:\whector{\bb X}=\left[\begin{array}{c}X^{1} \\ X^{2}\\X^{3}\end{array}\right].
\end{equation} 
Given the direction 3-vector of the photon $\pi(\bb K)=\whector{\bb K}$, consider the 2-plane $\langle\whector{\bb K}\rangle^{\perp}\subset\Sigma_{3}$ passing through the origin and orthogonal to it. Given a pair of orthonormal
basis vectors for this 2-plane, $\{\whector{\bb b_{1}}$, $\whector{\bb  b_{2}}\}$, we can write a general polarization 3-vector as 
\begin{equation}
\label{ }
\whector{\bb \digamma}=c_{1}\whector{\bb  b_{1}}+c_{2}\whector{\bb  b_{2}},
\end{equation}
for real constants $c_{1}$ and $c_{2}$. It is crucial this choice of basis vectors not be made arbitrarily.  We
choose basis vectors on intrinsic geometric criteria, which are independent
of coordinate descriptions.  First, we project the principal null direction
$\bb\ell$ onto $\Sigma_{3}$ and obtain the corresponding 3-vector $\pi(\bb\ell):=\whector{\bb\ell}$ using the prescription (\ref{proj}).  Then, we set the basis vectors in the plane of polarization to be 
\begin{eqnarray}
\label{PNDgauge1}
\whector{\bb  b_{1}}&:=&\frac{\whector{\bb\ell}\times\whector{\bb K}}{\Vert\whector{\bb \ell}\times\whector{\bb K}\Vert}, \\
\label{PNDgauge2}
\whector{\bb  b_{2}}&:=&\frac{\whector{\bb K}\times\whector{\bb b_{1}}}{\Vert\whector{\bb K}
\times\whector{\bb b_{1}}\Vert}.  
\end{eqnarray}
We are finally ready to spell out the communication protocol.  Let Alice
and Bob be two Carter observers in the Kerr exterior.  In order to
communicate with Bob, Alice sends a polarized photon along a null geodesic $\gamma(s)$
that intersects with Bob's worldline.  Alice polarizes the photon in the
basis (\ref{PNDgauge1})-(\ref{PNDgauge2}) at the event $x_{A}$, which we denote here by
$\whector\digamma_{\text{in}}$.  When Bob sees the photon he also measures
its polarization by projecting it onto the basis
(\ref{PNDgauge1})-(\ref{PNDgauge2}) at the event $x_{B}$ to obtain $\whector\digamma_{\text{out}}$.
Note that since these bases are defined intrinsically they can agree in
advance on the choice of these bases and set them up locally \emph{without further communication} once they have embarked on their orbits.

The geometrically induced Faraday rotation of the polarization vector of a
photon as it transverses the Kerr exterior from Alice and Bob is then given
by the angle $\chi$ such that 
\begin{equation}
\label{Faraday}
\whector\digamma_{\textup{out}}:=\left[\begin{array}{cc}\cos\chi & -\sin\chi \\\sin\chi & \cos\chi\end{array}\right]\whector\digamma_{\textup{in}}.
\end{equation}

\begin{remark}
The reference frame on which the measurement of Faraday rotation depends so critically is \emph{not} the orthonormal symmetric frame. Rather, it is the \emph{measurement basis} $\left\{\whector{\bb b_{1}}, \whector{\bb b_{2}}\right\}$ which plays the role of the reference frame. 
\end{remark}
\begin{remark}
Using the intrinsic geometry of Kerr to solve the problem of choosing a set
of basis vectors for the plane of polarization as specified in this section
simultaneously solves the problem of sharing frames and minimizes the
informational requirement on the observers.  Note that such a strategy is
simply \emph{unavailable} in Minkowski spacetime where no direction is
similarly privileged: there is too much symmetry.\footnote{Of course, in
  Minkowski space one can use alternative protocols.}  In our case, the type D character of Kerr geometry provides \emph{just enough symmetry} to
allow for the possibility of the present protocol with its minimal
communication requirements.  
\end{remark}
We are now in a position to prove
the following proposition.
\begin{proposition}\label{lemma} Consider observers confined to the equatorial plane $Eq:=\{-\infty<t<+\infty,r_{+}>r>+\infty, \theta=\pi/2, 0\le\phi<2\pi\}$. There is no Faraday rotation for photons confined to the equatorial plane of Kerr geometry.  
\end{proposition}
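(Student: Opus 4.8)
The plan is to turn the statement into a short computation in the symmetric frame, exploiting the fact that on the equatorial plane the $\bb\omega^{3}$ (polar) direction decouples from the rest. The first step I would establish is that the Carter observer's $4$-velocity is exactly the timelike leg $\bb e_{0}$ of the symmetric frame $\{\bb e_{a}\}$ dual to (\ref{carter11})--(\ref{carter12}): substituting $\bb U$ from (\ref{Observers}) gives $\bb\omega^{0}(\bb U)=1$ and $\bb\omega^{1}(\bb U)=\bb\omega^{2}(\bb U)=\bb\omega^{3}(\bb U)=0$. Hence $\Sigma_{3}=\langle\bb U\rangle^{\perp}=\mathrm{span}\{\bb e_{1},\bb e_{2},\bb e_{3}\}$ and the projection (\ref{proj}) merely reads off the spatial symmetric-frame components, $\pi(\bb X)=(X^{1},X^{2},X^{3})$, making the whole measurement protocol purely algebraic.

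Next I would specialize to an equatorial null geodesic. Confinement to $\theta=\pi/2$ forces $\dot\theta\equiv0$, hence $\Theta\equiv0$, $\kappa=\m D^{2}$ with $\m D=aE-\Phi$ constant, $\cos\theta=0$ and $\Sigma=r^{2}$. Feeding this into (\ref{ppframe}) I would record three structural facts: (i) $\bb Z=\bb L_{(3)}$ collapses to a vector along $\bb e_{3}$, namely $(0,0,0,-\mathrm{sgn}\,\m D)$; (ii) $\bb K$ and $\bb Y=\bb L_{(2)}$ have vanishing $\bb e_{3}$-component, so $\whector{\bb K}$ and $\whector{\bb Y}$ lie in $\mathrm{span}\{\bb e_{1},\bb e_{2}\}$; and (iii) $\whector{\bb\ell}$ points along $\bb e_{1}$, since $\bb\ell=\frac{1}{\sqrt2}(\bb e_{0}+\bb e_{1})$ in the symmetric frame (this actually holds everywhere, not only on $Eq$). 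It then follows that $\whector{\bb b_{1}}=\whector{\bb\ell}\times\whector{\bb K}/\Vert\cdots\Vert$ points along $\bb e_{3}$ while $\whector{\bb b_{2}}=\whector{\bb K}\times\whector{\bb b_{1}}/\Vert\cdots\Vert$ lies back in $\mathrm{span}\{\bb e_{1},\bb e_{2}\}$. Thus the measurement directions $\whector{\bb b_{1}}\parallel\bb e_{3}$, $\whector{\bb b_{2}}\in\mathrm{span}\{\bb e_{1},\bb e_{2}\}$ line up exactly with the projected polarization directions $\whector{\bb Z}\parallel\bb e_{3}$, $\whector{\bb Y}\in\mathrm{span}\{\bb e_{1},\bb e_{2}\}$.

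Writing the parallel-propagated polarization as $\bb\digamma=\alpha\bb Y+\beta\bb Z$ with $\alpha,\beta$ constant, I would then compute the measured components $c_{i}=\langle\whector{\bb\digamma},\whector{\bb b_{i}}\rangle$. The $\bb e_{3}$ piece gives at once $c_{1}=\beta\langle\whector{\bb Z},\whector{\bb b_{1}}\rangle=-\beta$, a constant; note the gauge term $c\bb K$ contributes nothing since $\whector{\bb b_{1}},\whector{\bb b_{2}}$ span $\langle\whector{\bb K}\rangle^{\perp}$. For $c_{2}=\alpha\langle\whector{\bb Y},\whector{\bb b_{2}}\rangle$ I would evaluate the cross products explicitly; the $\bb e_{1}$--$\bb e_{2}$ algebra collapses using $\mathrm{sgn}(\m D)\,\m D=\sqrt\kappa$ together with the identity $\Delta\,\m D^{2}+R=\m P^{2}$ (which is just $R=\m P^{2}-\Delta\kappa$ from (\ref{R}) with $\kappa=\m D^{2}$), yielding $\langle\whector{\bb Y},\whector{\bb b_{2}}\rangle=-\mathrm{sgn}(\m P)$, again constant along $\gamma$. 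Since both $c_{1}$ and $c_{2}$ are independent of the affine parameter, $\whector{\bb\digamma}_{\mathrm{in}}=\whector{\bb\digamma}_{\mathrm{out}}$, so the matrix in (\ref{Faraday}) is the identity and $\chi=0$.

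I expect the difficulty to be bookkeeping rather than conceptual. One must track the signs $\mathrm{sgn}(\m D)$ and $\mathrm{sgn}(\m P)$, handle the polarization gauge freedom $\bb\digamma\mapsto\bb\digamma+c\bb K$ correctly (automatic here, as noted), and above all recognise that it is the radical identity $\Delta\,\m D^{2}+R=\m P^{2}$ that forces $\langle\whector{\bb Y},\whector{\bb b_{2}}\rangle$ to be constant. The conceptual heart is the observation that equatorial confinement ($\cos\theta=\surd\Theta=0$) decouples the $\bb e_{3}$ axis and aligns the parallel-propagated polarization frame with the intrinsic measurement frame, which is exactly what makes the induced phase vanish.
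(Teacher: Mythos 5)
Your proof is correct, but it takes a genuinely different route from the paper's. The paper's argument is covariant and does not invoke the parallel-propagated frame of Section 3 at all: it introduces the unit normal $\bb V=-\frac1r\partial_{\theta}$ to the equatorial plane, checks via the Christoffel symbols of Appendix A that $\nabla_{\bb K}\bb V=0$ along any equatorial null geodesic, observes that $\pi(\bb V)=\whector{\bb b_{1}}$, and concludes that the $\whector{\bb b_{1}}$-component of the measured polarization is conserved (constancy of the other component being left implicit, via norm preservation). You instead specialize the explicit machinery of Sections 3 and 5 to $\cos\theta=\surd\Theta=0$ and verify directly that both measured components $c_{1},c_{2}$ are constant; in effect you are evaluating the general closed-form rotation matrix of Section~\ref{ExactFaraday} at $\cos\theta_{0}=\cos\theta(s)=0$ ahead of time. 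What the paper's route buys is conceptual clarity --- it isolates the mechanism as ``a unit normal to a totally geodesic submanifold is parallel along geodesics of that submanifold,'' which is exactly what transfers verbatim to the axis of symmetry in Proposition~\ref{lemma2} and motivates the conjecture that vanishing Faraday rotation characterizes totally geodesic submanifolds. What your route buys is completeness (both components shown constant, not just one) and economy (no new covariant-derivative computation once the frame $\{\bb K,\bb X,\bb Y,\bb Z\}$ is in hand). Two small caveats. First, your value $\langle\whector{\bb y},\whector{\bb b_{2}}\rangle=-\mathrm{sgn}(\m P)$ disagrees in sign with what the paper's tabulated $\whector{\bb b_{2}}$ in (\ref{basis2}) gives on the equator (namely $+1$); this traces to an orientation/sign ambiguity between (\ref{PNDgauge2}) and (\ref{basis2}) and is immaterial, since only constancy of the component is needed. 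Second, like the paper, you should exclude the degenerate equatorial geodesics with $\m D=aE-\Phi=0$, for which $\kappa=0$ and the vectors $\bb Y,\bb Z$ and the basis (\ref{basis1})--(\ref{basis2}) are undefined.
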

\begin{proof}
Consider the vector field $\bb V:=-\frac1{r}\partial_{\theta}$. This is unit norm vector field which is orthogonal to the equatorial plane when restricted to it. We will by an abuse of notation use $\bb V$ to denoted $\bb V\vert_{Eq}$. An easy calculation shows that $\pi(\bb V)=\whector{\bb b_{1}}$. Using the expressions for the Christoffel symbols given in Appendix A, we obtain
\begin{equation}
\label{ }
\nabla_{\bb K}\bb V=0.
\end{equation}
By (\ref{TransportLaw}) and (\ref{OrthogonalityCondition}), it follows that $\digamma^{a} V^{b}\eta_{ab}=0$, which together with $\pi(\bb V)=\whector{\bb b_{1}}$ implies
\begin{equation}
\label{ }
\whector{\bb\digamma}\cdot\whector{\bb b_{1}}=\text{constant}.
\end{equation}
\end{proof}

\begin{cor}There is no Faraday rotation in the Schwarzschild geometry.
\begin{proof}
  Since the Schwarzschild geometry is spherically symmetric, geodesics are
  confined to planes through the origin~\cite{Chandrasekhar92}.  Therefore,
  the exact same argument as we used for the equatorial plane in Kerr can
  be used here.  Any plane through the original can be viewed as the
  equatorial plane of a degenerate Kerr solution with $a=0$.
\end{proof}
\end{cor}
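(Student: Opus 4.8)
The plan is to show that the rotation matrix in (\ref{Faraday}) is the identity, i.e.\ $\chi=0$. I would first reduce this to a single scalar. Writing the measured polarization in the orthonormal measurement basis as $\whector{\bb\digamma}=c_{1}\whector{\bb b_{1}}+c_{2}\whector{\bb b_{2}}$, the defining relation (\ref{Faraday}) is a rotation and hence preserves the Euclidean norm $c_{1}^{2}+c_{2}^{2}$; it therefore suffices to prove that the single component $c_{1}=\whector{\bb\digamma}\cdot\whector{\bb b_{1}}$ stays constant along $\gamma$, for then the norm constraint forces $c_{2}$ constant up to a sign removed by continuity, $(c_{1},c_{2})$ is unchanged, and $\chi=0$.

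To manufacture such a conserved scalar I would look for a vector field $\bb V$ along $\gamma$ that is (i) parallel propagated, $\nabla_{\bb K}\bb V=0$, (ii) spatial for the Carter observer, i.e.\ orthogonal to $\bb U$, and (iii) satisfies $\pi(\bb V)=\whector{\bb b_{1}}$. A one-line check against (\ref{carter11}) shows that $\bb U$ of (\ref{Observers}) is precisely the timelike leg of the symmetric orthonormal frame, so the projection $\pi$ merely discards the timelike slot; consequently, whenever $V^{0}=0$ one has $\eta_{ab}\digamma^{a}V^{b}=-\,\whector{\bb\digamma}\cdot\whector{\bb V}$. Granting (i)--(iii) the argument then closes at once: $\bb\digamma$ is parallel propagated by (\ref{TransportLaw}) and the metric is covariantly constant, so $\eta_{ab}\digamma^{a}V^{b}$ is constant, whence $\whector{\bb\digamma}\cdot\whector{\bb b_{1}}$ is constant as required.

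The natural candidate is the unit normal to the equatorial plane, $\bb V:=-\frac{1}{r}\partial_{\theta}$, which on $Eq$ is minus the unit frame vector dual to the coframe element $\bb\omega^{3}$ of (\ref{carter12}): it is spacelike of unit length and orthogonal to $\bb U$, so $V^{0}=0$ and (ii) holds automatically. For (iii) I would use the cross-product definition (\ref{PNDgauge1}): on $Eq$ one has $\dot\theta=0$, so the $\bb\omega^{3}$-component of $\bb K$ vanishes, while $\bb\ell$ carries no $\partial_{\theta}$; hence both projections $\whector{\bb\ell}$ and $\whector{\bb K}$ lie in the equatorial tangent $2$-plane orthogonal to $\bb V$, their cross product is forced along the normal, and after normalization equals $\pm\whector{\bb V}$, the overall sign being fixed by the factor $-1/r$.

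The main obstacle is property (i), the parallel transport $\nabla_{\bb K}\bb V=0$. Conceptually it expresses that $Eq$ is totally geodesic, being the fixed-point set of the reflection isometry $\theta\mapsto\pi-\theta$, so that a null geodesic launched tangent to $Eq$ remains in it and its unit normal is parallel along it. Concretely I would verify $\nabla_{\bb K}\bb V=0$ by substituting the Christoffel symbols of Appendix A at $\theta=\pi/2$ and noting that the symbols which would rotate $\bb V$ out of the normal direction carry a single $\theta$-index, are therefore odd under the reflection $\theta\mapsto\pi-\theta$, and vanish at $\theta=\pi/2$, while the surviving normal-direction term is cancelled by the $1/r$ normalization. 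With (i) established, (ii)--(iii) and the conservation argument give $\whector{\bb\digamma}\cdot\whector{\bb b_{1}}=\mathrm{const}$ and hence $\chi=0$; the Schwarzschild corollary then follows by setting $a=0$ and using spherical symmetry to regard any geodesic plane as an equatorial plane.
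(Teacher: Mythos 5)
Your proposal is correct and follows essentially the same route as the paper: reduce Schwarzschild to the $a=0$ equatorial plane via spherical symmetry, and establish the equatorial case by exhibiting the parallel-propagated unit normal $\bb V=-\frac1r\partial_{\theta}$ with $\pi(\bb V)=\pm\whector{\bb b_{1}}$, so that $\eta_{ab}\digamma^{a}V^{b}$, and hence $\whector{\bb\digamma}\cdot\whector{\bb b_{1}}$, is conserved. The only difference is that the paper simply cites its already-proved equatorial-plane proposition (Proposition \ref{lemma}), whereas you inline that argument and additionally spell out the norm-preservation step that upgrades ``$c_{1}$ constant'' to $\chi=0$.
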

\begin{proposition}\label{lemma2} There is no Faraday rotation for orbits
  confined to the axis of symmetry of Kerr geometry.   
\begin{proof} The unit vector $\frac{\surd\Delta}{\surd\Sigma}\partial_{r}$
  plays the same role as $\bb V$ in Proposition \ref{lemma}.  The proof
  follows the same argument as Proposition \ref{lemma} and is therefore
  omitted. \end{proof}
\end{proposition}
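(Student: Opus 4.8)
The plan is to follow the proof of Proposition \ref{lemma} line for line, so I first pin down the geodesics involved. A null geodesic confined to the axis has $\theta\equiv 0$ (or $\theta\equiv\pi$), so $\dot\theta=\pm\surd\Theta/\Sigma$ forces $\Theta\equiv 0$; finiteness of $\m D=a\sin\theta\,E-\Phi/\sin\theta$ on the axis forces $\Phi=0$, and then $\Theta(0)=\kappa$ gives $\kappa=0$. Hence $R=\m P^{2}$ and $\m D=\surd\Theta=0$, so the tangent collapses to $\bb K=\frac{\m P}{\surd{\Sigma\Delta}}(1,1,0,0)$ in the symmetric frame: these are exactly the integral curves of the principal null direction $\bb\ell$, and $\whector{\bb K}$ is purely radial.

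Next I would produce the analogue of $\bb V$. Since the motion is radial, the polarization $2$-plane is transverse to the radial leg, and the natural parallel candidate is the unit field $\bb W:=\frac1{\surd\Sigma}\partial_\theta$ dual to $\bb\omega^{3}$ --- note this is, up to sign, the very field $\bb V=-\frac1r\partial_\theta$ used on the equatorial plane, now evaluated at $\theta=0$. Using the Christoffel symbols of Appendix A I would verify $\nabla_{\bb K}\bb W=0$; I expect this to drop out because along a radial geodesic the surviving connection coefficients only mix the $t$ and $r$ directions and leave the $\theta$-leg rigid, exactly as in the Schwarzschild radial computation. Granting this, (\ref{TransportLaw}) and (\ref{OrthogonalityCondition}) make $\eta_{ab}\digamma^{a}W^{b}$ constant along $\gamma$, and since $\Vert\whector{\bb\digamma}\Vert$ is conserved under parallel transport, the angle between $\whector{\bb\digamma}$ and $\whector{\bb W}$ is constant.

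The main obstacle is that the measurement basis (\ref{PNDgauge1})-(\ref{PNDgauge2}) degenerates on the axis. Because the geodesic coincides with the principal null congruence, $\whector{\bb\ell}$ and $\whector{\bb K}$ are both radial, so $\whector{\bb\ell}\times\whector{\bb K}=0$ and $\whector{\bb b_1}$ becomes a $0/0$ expression; in parallel, $\kappa=0$ makes the frame (\ref{ppframe}) singular and the azimuthal leg $\bb\omega^{2}$ collapses. I would therefore read Proposition \ref{lemma2} as the $\theta\to 0$ limit of off-axis geodesics and establish the alignment of $\{\whector{\bb b_1},\whector{\bb b_2}\}$ with the parallel-propagated transverse frame in that limit; concretely, the delicate point is to confirm that the frame dragging carried by the Carter observers induces no residual rotation of the measurement basis relative to the parallel frame as the axis is approached, which is precisely what pins the rotation angle $\chi$ of (\ref{Faraday}) to zero.
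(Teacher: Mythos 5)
Your opening reduction is correct and useful: axial confinement forces $\Phi=0$ and $\kappa=0$, so $R=\m P^{2}$ and $\bb K$ is proportional to the principal null direction, and you are right that both the measurement basis (\ref{PNDgauge1})--(\ref{PNDgauge2}) and the parallel frame (\ref{ppframe}) degenerate there. But the step on which your whole argument rests --- that $\bb W=\frac{1}{\surd\Sigma}\partial_{\theta}$ satisfies $\nabla_{\bb K}\bb W=0$ along an axial geodesic because ``the surviving connection coefficients only mix the $t$ and $r$ directions,'' as in Schwarzschild --- is false, and it fails precisely because of the frame dragging you defer to the end. From Appendix A, $\Gamma^{\varphi}_{\phantom{\varphi}\theta t}=-2Mra\cot\theta/\Sigma^{2}$ and $\Gamma^{\varphi}_{\phantom{\varphi}\theta\varphi}=\cot\theta+2Mra^{2}\cos\theta\sin\theta/\Sigma^{2}$ diverge as $\theta\to0$ while $\Vert\partial_{\varphi}\Vert\sim\sin\theta\,\surd\Sigma$, so their products with $\partial_{\varphi}$ have finite nonzero limits. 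Writing $\bb e_{2}$ for the frame vector dual to $\bb\omega^{2}$ and $\bb e_{0}=\frac{1}{\sqrt{\Sigma\Delta}}\bigl((r^{2}+a^{2})\partial_{t}+a\partial_{\varphi}\bigr)$ for the Carter velocity, one finds in the limit $\theta\to0$ along a fixed-$\varphi$ half-plane
\begin{equation}
\nabla_{\bb e_{0}}\Bigl(\frac{1}{\surd\Sigma}\partial_{\theta}\Bigr)\longrightarrow-\frac{a\surd\Delta}{(r^{2}+a^{2})^{3/2}}\,\bb e_{2},
\end{equation}
which vanishes only for $a=0$ (the radial leg $\nabla_{\bb e_{1}}$ contributes nothing, so $\nabla_{\bb K}\bb W\neq0$ since $K^{0}\neq0$). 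The equatorial case works because every mixing symbol carries a factor of $\cos\theta$ or $\cot\theta$ and dies at $\theta=\pi/2$; none of that survives on the axis, so the Schwarzschild analogy does not transfer, $\eta_{ab}\digamma^{a}W^{b}$ is not conserved, and your conclusion does not follow.

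Note also that the paper's own (admittedly terse) proof chooses a different field, the radial unit vector $\frac{\surd\Delta}{\surd\Sigma}\partial_{r}$, i.e.\ the spatial direction of propagation itself rather than a transverse direction, and does not attempt to exhibit a parallel transverse vector at all. What your computation actually reveals is that the parallel-propagated polarization plane precesses on the axis, relative to the fixed-$\varphi$ coordinate directions, at the rate displayed above; so the vanishing of $\chi$ can only mean that the limiting measurement basis $\{\whector{\bb b_{1}},\whector{\bb b_{2}}\}$ precesses at exactly the same rate. That compensation is the entire content of the proposition in the axial case, and it is exactly the step your final paragraph postpones. Until it is carried out --- for instance by taking $\kappa\to0$ in (\ref{parallelbasis1})--(\ref{parallelbasis2}), or by tracking the precession of $\whector{\bb\ell}\times\whector{\bb K}$ along nearly axial geodesics --- the proof is incomplete.
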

\begin{remark}
We conjecture that the vanishing of the Faraday rotation characterizes all totally geodesic submanifolds of Kerr geometry. 
\end{remark}

\section{Exact, closed form expression for the Faraday rotation in Kerr geometry}\label{ExactFaraday}
The direction 3-vector corresponding to the principal null direction $\ell$ in the symmetric frame is given in $\Sigma_{3}$ by
\begin{equation}
\label{ } 
\whector{\bb\ell}=\left[\begin{array}{c}1\\0\\0\end{array}\right],
\end{equation}
and direction 3-vector for an arbitrary photon in $\Sigma_{3}$ is given by
\begin{equation}
\label{ }
\whector{\bb K} =\frac{1}{\m P}\left[\begin{array}{c}\surd R\\ \surd\Delta\m D\\\sqrt{\Delta\Theta} \end{array}\right].
\end{equation}
Now, using (\ref{PNDgauge1})-(\ref{PNDgauge2}), we obtain the following basis for the plane of polarization:
\begin{eqnarray}
\label{basis1}
\whector{\bb b_{1}}&=&\frac1{\surd\kappa} \left[\begin{array}{c}0\\-\surd\Theta\\\m D\end{array}\right],\\
\label{basis2}
\whector{\bb b_{2}}&=& \frac1{\surd\kappa\m P} \left[\begin{array}{c}-\kappa\surd\Delta\\ \m D\surd{R}\\ \sqrt{R\Theta}\end{array}\right].  
\end{eqnarray}
We may choose the affine parameter $s$ so that $s=0$ at the event $x_{A}$ where the null geodesic intersects Alice's worldline and $s=s_{*}$ at the event $x_{B}$ where the null geodesic intersects Bob's worldline.  The basis vectors $\bb Y_{\gamma(s)}$ and $\bb Z_{\gamma(s)}$ of the plane of polarization $\mathcal P_{\gamma(s)}\in T_{\gamma(s)}M$ can now be projected onto the 2-plane of polarization in $\Sigma_{3}$, in the basis (\ref{basis1})-(\ref{basis2}):
\begin{eqnarray}
\label{parallelbasis1}
\whector{\bb y}:=\left[\begin{array}{c}\pi\left(\bb Y\right)\cdot\whector{\bb b_{1}}\\\pi\left(\bb Y\right)\cdot\whector{\bb  b_{2}}\end{array}\right]:= \frac1{\surd\Sigma}\left[\begin{array}{c}-a\cos\theta\\r\end{array}\right],\\
\label{parallelbasis2}
\whector{\bb z}=\left[\begin{array}{c}\pi\left(\bb Z\right)\cdot\whector{\bb b_{1}}\\\pi\left(\bb Z\right)\cdot\whector{\bb b_{2}}\end{array}\right]== \frac1{\surd\Sigma}\left[\begin{array}{c}-r \\ -a\cos\theta\end{array}\right].
\end{eqnarray}
Note that terms with $s$ do not survive.  All the dynamic information is
contained in the behaviour of $r$ and $\theta$.  Note as well that the polarization vector has
constant components in (\ref{parallelbasis1}) and (\ref{parallelbasis2}). 
At the event $x_{A}=\gamma(s=0)$, let Alice choose
\begin{equation}
\label{ }
\whector{\bb\digamma}_{\text{in}} = \left[\begin{array}{c}c_{1} \\c_{2}\end{array}\right] = c_{1}\whector{\bb b_{1}}+c_{2}\whector{\bb  b_{2}},
\end{equation}
that is,
\begin{equation}
\whector{\bb\digamma}_{\text{in}} = \frac1{\surd\Sigma_{0}}\left(\left(r_{0}c_{2}-c_{1}a\cos\theta_{0}\right)\whector{\bb y} - \left(c_{1}r_{0}+c_{2}a\cos\theta_{0}\right)\whector{\bb z}\right).
\end{equation}
The components of $\bb\digamma$, which stay constant with respect to the
parallel propagated frame $\bb L_{(a)}$ given by (\ref{ppframe}), are therefore
\begin{equation}
\label{ConstComponents}
\bb\digamma^{(a)}=-\frac1{\surd\Sigma_{0}}\left[\begin{array}{c}0\\0\\ c_{1}a\cos\theta_{0}-c_{2}r_{0}\\ c_{1}r_{0}+c_{2}a\cos\theta_{0}\end{array}\right].
\end{equation}
At $x_{B}=\gamma(s=s_{*})$, Bob measures $\whector{\bb\digamma}$ in the basis $\{\whector{\bb b_{1}}, \whector{\bb b_{2}}\}$, to obtain $\whector{\bb\digamma}_{\text{out}}$ which is given by (we supress the subscript for $s=s_{*}$):
\begin{eqnarray}
\whector{\bb\digamma}_{\text{out}} & = & \frac1{\surd\Sigma_{0}}\left(\left(r_{0}c_{2}-c_{1}a\cos\theta_{0}\right)\whector{\bb y}(s) - \left(c_{1}r_{0}+c_{2}a\cos\theta_{0}\right)\whector{\bb z}(s)\right)\\
\nonumber
&=&\frac1{\sqrt{\Sigma_{0}\Sigma}}\left[\begin{array}{c}c_{1}\left(r(s)r_{0}+a^{2}\cos\theta_{0}\cos\theta(s)\right)-c_{2}\left(r_{0}a\cos\theta(s)-r(s)a\cos\theta_{0}\right) \\ c_{1}\left(r_{0}a\cos\theta(s)-r(s)a\cos\theta_{0}\right)+c_{2}\left(r(s)r_{0}+a^{2}\cos\theta_{0}\cos\theta(s)\right)\end{array}\right]\\
\nonumber
&=& \frac1{\sqrt{\Sigma_{0}\Sigma}}\left[\begin{array}{cc}\left(r(s)r_{0}+a^{2}\cos\theta_{0}\cos\theta(s)\right) & -\left(r_{0}a\cos\theta(s)-r(s)a\cos\theta_{0}\right) \\\left(r_{0}a\cos\theta(s)-r(s)a\cos\theta_{0}\right) & \left(r(s)r_{0}+a^{2}\cos\theta_{0}\cos\theta(s)\right)\end{array}\right]\whector{\bb\digamma}_{\text{in}}.
\end{eqnarray}
That is, the rotation matrix in $(\ref{Faraday})$ is therefore
\begin{equation}
\frac1{\sqrt{\Sigma_{0}\Sigma}}\left[\begin{array}{cc}\left(r(s)r_{0}+a^{2}\cos\theta_{0}\cos\theta(s)\right) & -\left(r_{0}a\cos\theta(s)-r(s)a\cos\theta_{0}\right) \\\left(r_{0}a\cos\theta(s)-r(s)a\cos\theta_{0}\right) & \left(r(s)r_{0}+a^{2}\cos\theta_{0}\cos\theta(s)\right)\end{array}\right].
\end{equation}
This implies that
\begin{equation}
\label{Faraday}
\tan\chi(s)= \frac{a\left(r(s)\cos\theta_{0}-r_{0}\cos\theta(s)\right)}{r(s)r_{0}+a^{2}\cos\theta_{0}\cos\theta(s)}.
\end{equation}  

\section{Summary and discussion}
The radical simplicity of (\ref{Faraday}) stems from our geometrically
motivated choice of observers, frames, polarization plane and measurement
basis.  Exploiting the existence of the Killing-Yano tensor in Kerr
geometry, we were able to obtain a parallel propagated frame, thereby
transforming the problem of parallel transport of the polarization vector
into one of raising and lowering frame indices.  The fact that the
parallel propagated frame provides two vector fields that form a natural
basis for the plane of polarization in $T_{x}M$ at each point
$x\in\gamma(s)$ reduces the calculation of Faraday rotation to an elementary computation.

Choosing a specific class of observers in order to make it easy to compute
the result does not limit the applicability of the technique to just those
observers. Since the components in the parallel propagated frame have to stay constant, in order
to determine the Faraday rotation measured by another choice of observers,
we must apply local Lorentz transformations \emph{only at the two events}
$x_{A}$ and $x_{B}$ in order to relate the frames of the arbitrary
observers to the frames of the Carter observers going through the same
spacetime events. This is a \emph{local} transformation, quite distinct
from the geometric effect of the Kerr black hole which is a \emph{global}
phenomenon.  The analogous question in Minkowski geometry is the study of
Wigner rotation which has been extensively analyzed in the massless
case~\cite{Alsing02,Gingrich03,Terashima02}.

In order to qualitatively analyze the expression we have obtained for Faraday rotation we present some plots in Appendix B. In each of the 3 tables, the first figure (a) shows the
orbital behaviour of the null geodesic with $(r(s),\phi(s))$ as polar
coordinates, the second figure (b) depicts the same orbits in three
dimensions with spherical coordinates $(r(s),\theta(s),\phi(s))$, and the
last figure (c) depicts the Faraday rotation as a function of the affine
parameter $s$.

Tables 1 and 2 show co-rotating orbits since $\Phi>0$, while Table 3 shows
a counter-rotating orbit $(\Phi<0)$.  The apparent axial symmetry of the
Faraday rotation in Tables 1 and 3 is an artifact of our choice of initial
data, and not due to their co- and counter-rotating character.  In the
first two orbits $\dot\chi(0)<0$, while for the third one $\dot\chi(0)>0$.
The sign of $\dot\chi(0)$ is determined by the sign of the left hand side
of equation (\ref{critical}) evaluated at $s=0.$

The set of figures in Table 2 corresponds to an interesting null orbit.  A
segment of this orbit lies \emph{inside} the ergosphere (the dotted line in
figure (a)), which, somewhat surprisingly, does not seem to have a
qualitative effect on the Faraday rotation.  Photons on this orbit
circumnavigate the black hole before escaping to infinity.  That is, the
acquired azimuthal angle $\Delta\phi$ is greater than $2\pi$.  This is
possibly why $\chi$ has three critical points for this orbit.  However, this conjecture cannot be resolved without a
classification of the critical points of $\chi$, which are implicity given by
\begin{equation}
\label{critical}
\frac{r}{\surd R} + \frac{\cot\theta}{\surd\Theta}=0
\end{equation}
where $R$ and $\Theta$ are given by (\ref{R}) and (\ref{Theta}) respectively.  This is a trancendental equation with two elliptic functions with different periods.  We are unaware of any methods to obtain explicit solutions.  

Finally, we note that the measured Faraday rotation $\chi$ is invariant under the involution $L$ given by (\ref{involution}). 

The investigations of the present paper suggest a number of avenues of
further investigation.  First, and perhaps the most pressing, is to study
quantum evolution in Kerr geometry and understand how the density
matrices describing states evolve as quantum systems are exchanged between
observers.  Here one is interested in the evolution of wave packets, and not
just in the purely geometric problem of propagating polarization vectors along a null
geodesic.  Recent progress in the Cauchy problem for the Dirac equation in
Kerr geometry~\cite{Finster02,Finster03,Finster09} make it possible to
study this problem rigorously for Dirac particles.  The analogous problem
for vector particles would require new advances in our understanding of the
Maxwell equations in Kerr geometry.

In quantum information theory a central concern is coping with noise.  In
order to understand the effect of noise it would be interesting to
investigate the sensitivity of our results to perturbations of the initial
data.

Finally, we have considered the question of sharing very special frames percisely using
geometric features.  It would be very interesting to understand how to
share more general classes of frames; this is a topic that would involve
group theory as well as geometry and would make an appealing complement to the quantum information theory studies of Bartlett et
al.~\cite{Bartlett03,Bartlett07}. 
\section{Acknowledgements}
This research was supported by grants from NSERC and by a BRC grant from
the Office of Naval Research (N00001 1408 11249).  We are very grateful to
Eva Hackman, without whose help we would not be able to plot the geodesics.
Farooqui and Panangaden would like to acknowledge useful conversations with
Paul Alsing.

\bibliography{polarization}

\appendix

\section{Christoffel symbols}
The Christoffel symbols are defined by
\begin{equation}
\label{ }
\Gamma^{i}_{\phantom{i}jk}=\frac12 g^{il}(g_{lj, k}+g_{lk, j}-g_{jk, l})
\end{equation}
In Boyer-Lindquist coordinates, the nonzero ones are:
\begin{align*}
\label{}
\Gamma^{t}_{\phantom{t}r t}=& M(r^{2}+a^{2})(r^{2}-a^{2}\cos^{2}\theta)/\Sigma^{2}\Delta  \\
\Gamma^{t}_{\phantom{t}\theta t}=& -2Mra^{2}\cos\theta\sin\theta/\Sigma^{2} \\
\Gamma^{t}_{\phantom{t}r\varphi}=& aM\sin^{2}\theta(a^{4}\cos^{2}\theta-r^{2}a^{2}\cos^{2}\theta-r^{2}a^{2}-3r^{4})/\Sigma^{2}\Delta\\
\Gamma^{t}_{\phantom{t}\theta\varphi}=&2Mra^{3}\sin^{3}\theta\cos\theta/\Sigma^{2}\\
\Gamma^{r}_{\phantom{r}tt}=& M(r^{2}-a^{2}\cos^{2}\theta)\Delta/\Sigma^{3}\\
\Gamma^{r}_{\phantom{r}\varphi t}=& -aM\sin^{2}\theta(r^{2}-a^{2}\cos^{2}\theta)\Delta/\Sigma^{3}\\
\Gamma^{r}_{\phantom{r}rr}=& \left(ra^{2}\sin^{2}\theta-M(r^{2}-a^{2}\cos^{2}\theta)\right)/\Sigma\Delta\\
\Gamma^{r}_{\phantom{r}\theta r}=&-a^{2}\cos\theta\sin\theta/\Sigma \\
\Gamma^{r}_{\phantom{r}\theta\theta}=& -r\Delta/\Sigma \\
\Gamma^{r}_{\phantom{r}\varphi\varphi}=& \Delta\sin^{2}\theta\left(Ma^{2}\sin^{2}\theta(r^{2}-a^{2}\cos^{2}\theta)-r\Sigma^{2}\right)/\Sigma^{3}\\
\Gamma^{\theta}_{\phantom{\theta}tt}=& -2Mra^{2}\sin\theta\cos\theta/\Sigma^{3}\\
\Gamma^{\theta}_{\phantom{\theta}\varphi t}=& 2Mra\sin\theta\cos\theta(r^{2}+a^{2})/\Sigma^{3}\\
\Gamma^{\theta}_{\phantom{\theta}rr}=& a^{2}\sin\theta\cos\theta/\Sigma\Delta\\
\Gamma^{\theta}_{\phantom{\theta}r\theta}=& r/\Sigma\\
\Gamma^{\theta}_{\phantom{\theta}\theta\theta}=& -a^{2}\sin\theta\cos\theta/\Sigma\\
\Gamma^{\theta}_{\phantom{\theta}\varphi\varphi}=& -\cos\theta\sin\theta(\Sigma^{2}\Delta+2Mr(r^{4}+a^{2}+2r^{2}))/\Sigma^{3}\\
\Gamma^{\varphi}_{\phantom{\varphi}rt}=& Ma(r^{2}-a^{2}\cos^{2}\theta)/\Sigma^{2}\Delta\\
\Gamma^{\varphi}_{\phantom{\varphi}\theta t}=& -2Mra\cot\theta/\Sigma^{2}\\
\Gamma^{\varphi}_{\phantom{\varphi}r\varphi}=& \left((r-M)\Sigma^{2}-M(r^{2}+a^{2})(r^{2}-a^{2}\cos^{2}\theta)\right)/\Sigma^{2}\Delta\\
\Gamma^{\varphi}_{\phantom{\varphi}\theta\varphi}=& \cot\theta +2Mra^{2}\cos\theta\sin\theta/\Sigma^{2}
\end{align*}

\section{Plots}
\begin{table}[ht]
\caption{A co-rotating orbit with $\Phi=3, \kappa=12$ and initial data $r(0)=20, \theta(0)=1.57$, and  $\phi(0)=0$.}
\centering
\begin{tabular}{c}
\hline
  (a) The orbit in polar coordinates $(x=r\cos\phi,y=r\sin\phi)$.\\
\includegraphics[scale=.32]{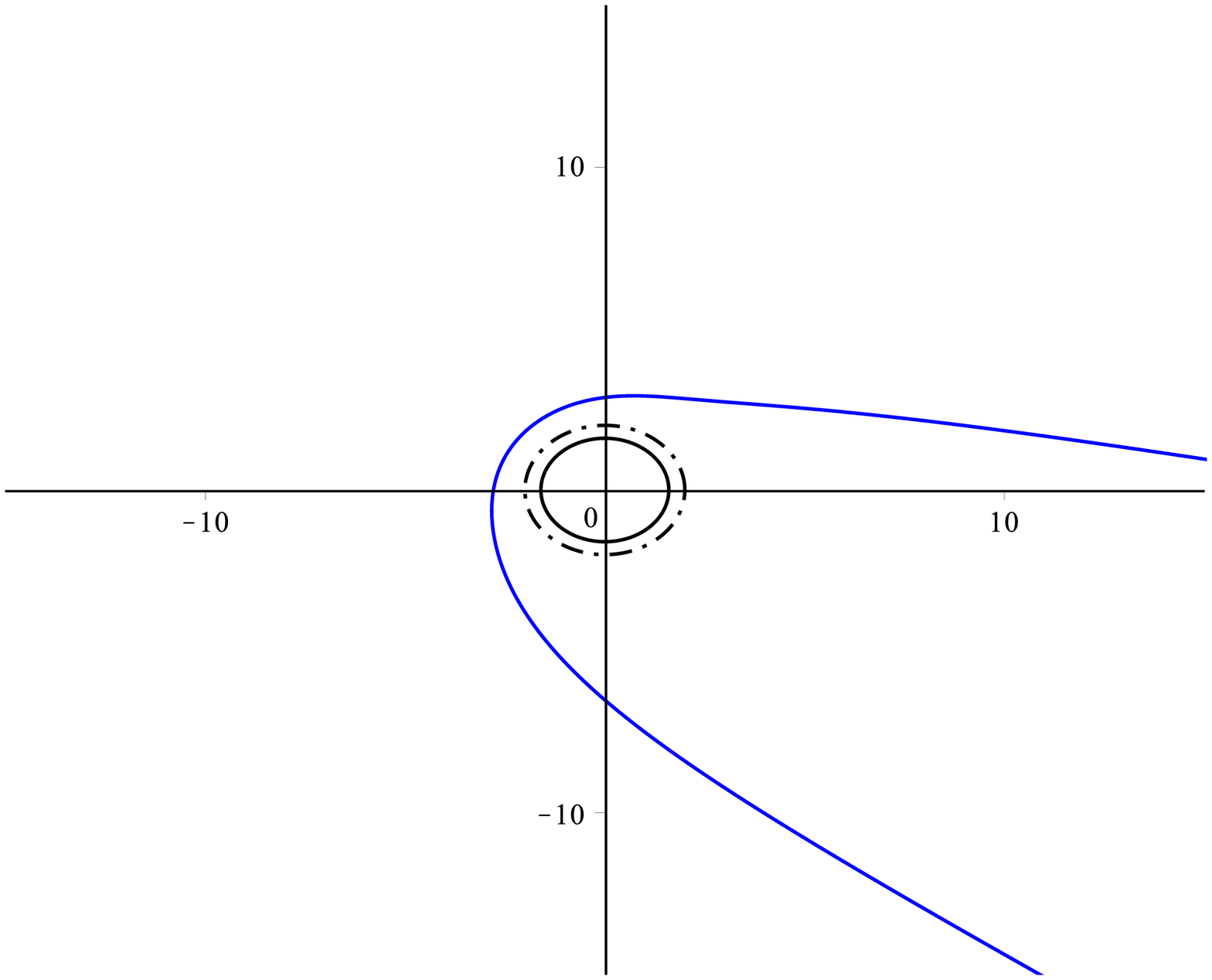}\\
\hline
  (b) The orbit in 3D spherical coordinates $\left(x=r\cos\phi\sin\theta, y=r\sin\phi\sin\theta, z=r\cos\theta\right)$.\\
\includegraphics[scale=.32]{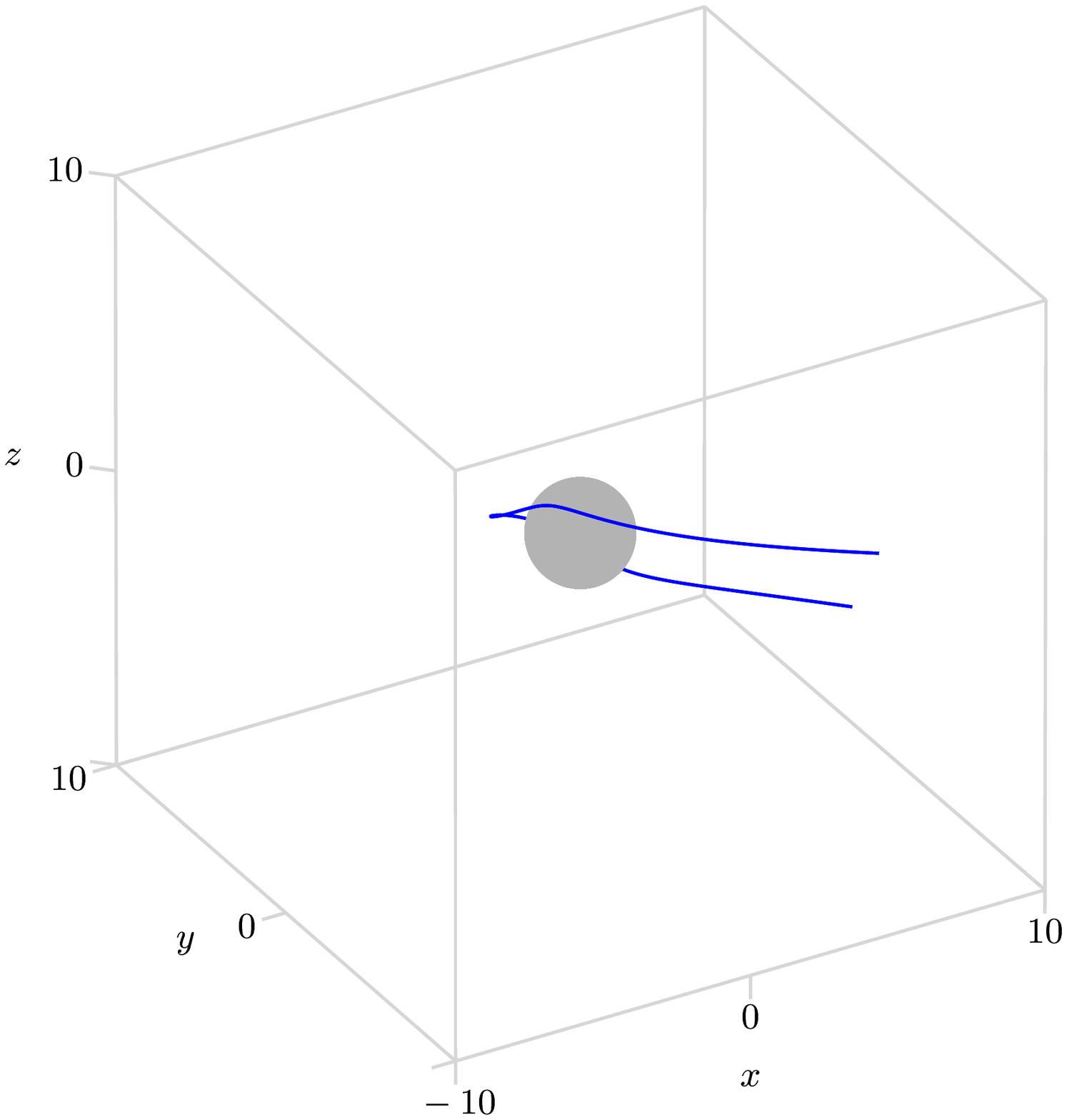}\\
\hline
  (c) The Faraday rotation angle as a function of the affine parameter $s$\\
\includegraphics[scale=.28]{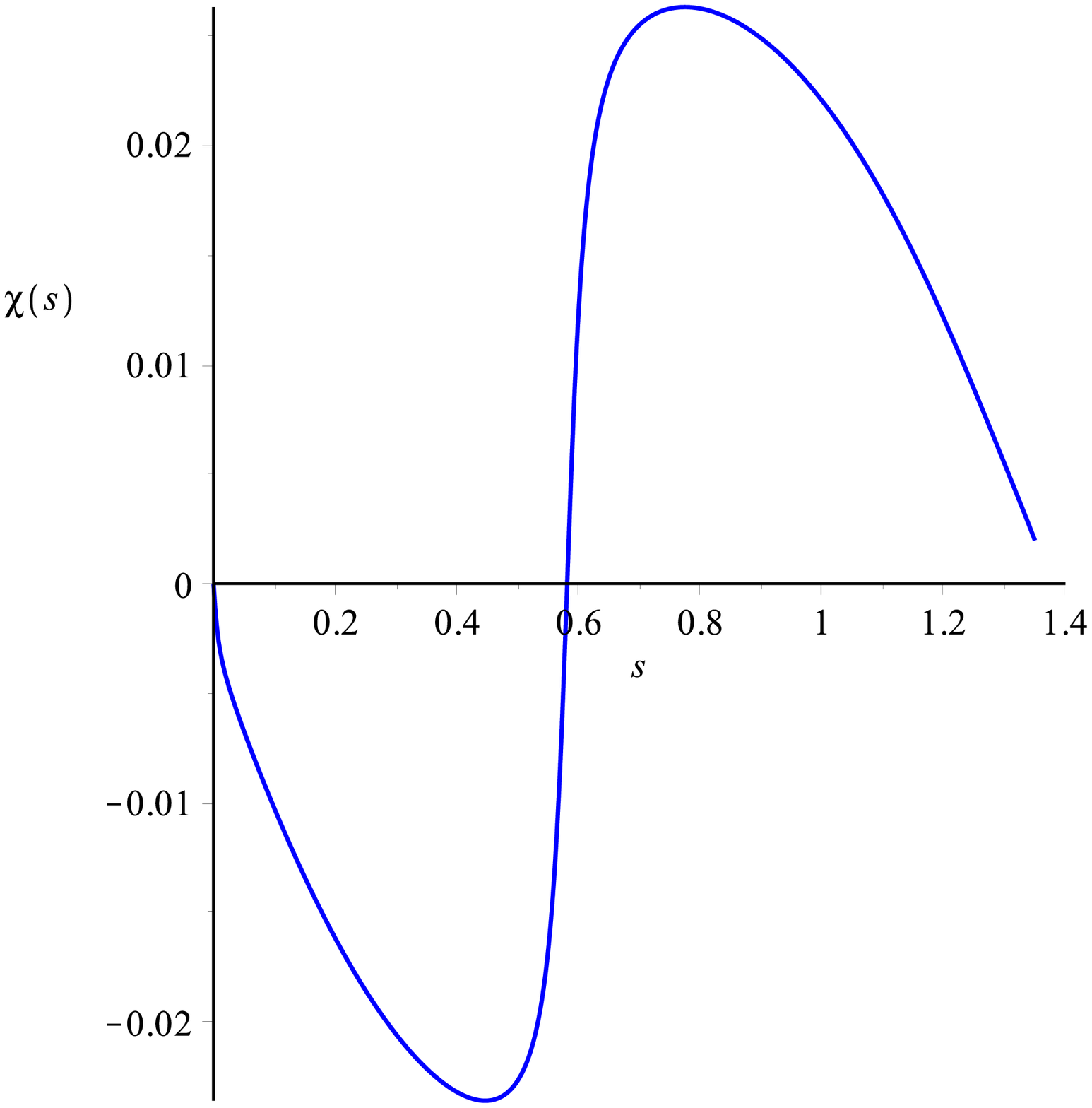}\\
\hline
\end{tabular}
\label{tab:gt}
\end{table}
\newpage
\begin{table}[ht]
\caption{A co-rotating orbit with $\Phi=3.11, \kappa=6.97$ and initial data $r(0)=20, \theta(0)=1.57$, and  $\phi(0)=0$.}
\centering
\begin{tabular}{c}
\hline
  (a) The orbit in polar coordinates $(x=r\cos\phi,y=r\sin\phi).$\\
\includegraphics[scale=.32]{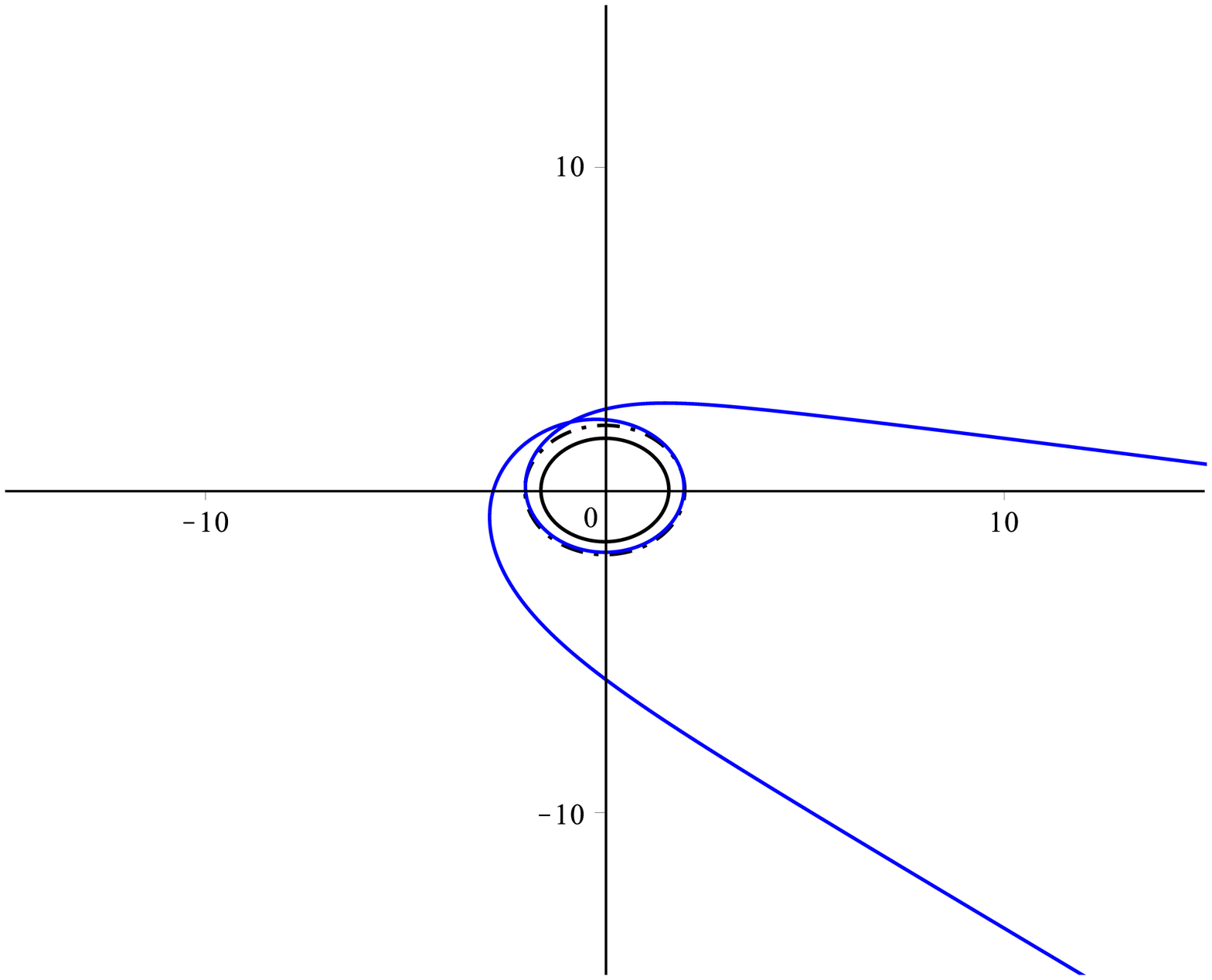}\\
\hline
  (b) The orbit in 3D spherical coordinates $\left(x=r\cos\phi\sin\theta, y=r\sin\phi\sin\theta, z=r\cos\theta\right)$.  \\
\includegraphics[scale=.32]{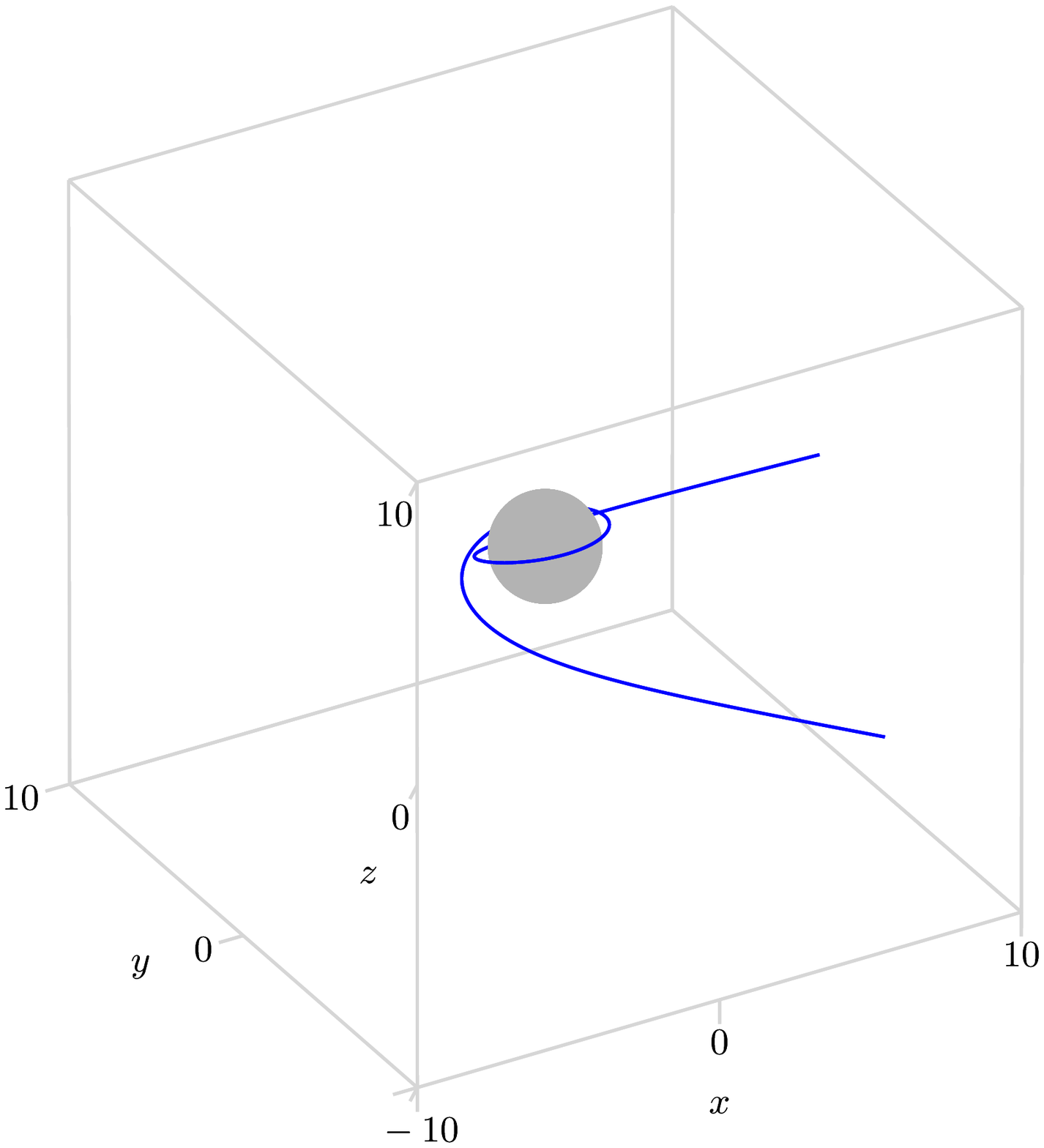}\\
\hline
  (c) The Faraday rotation angle as a function of the affine parameter $s$\\
\includegraphics[scale=.28]{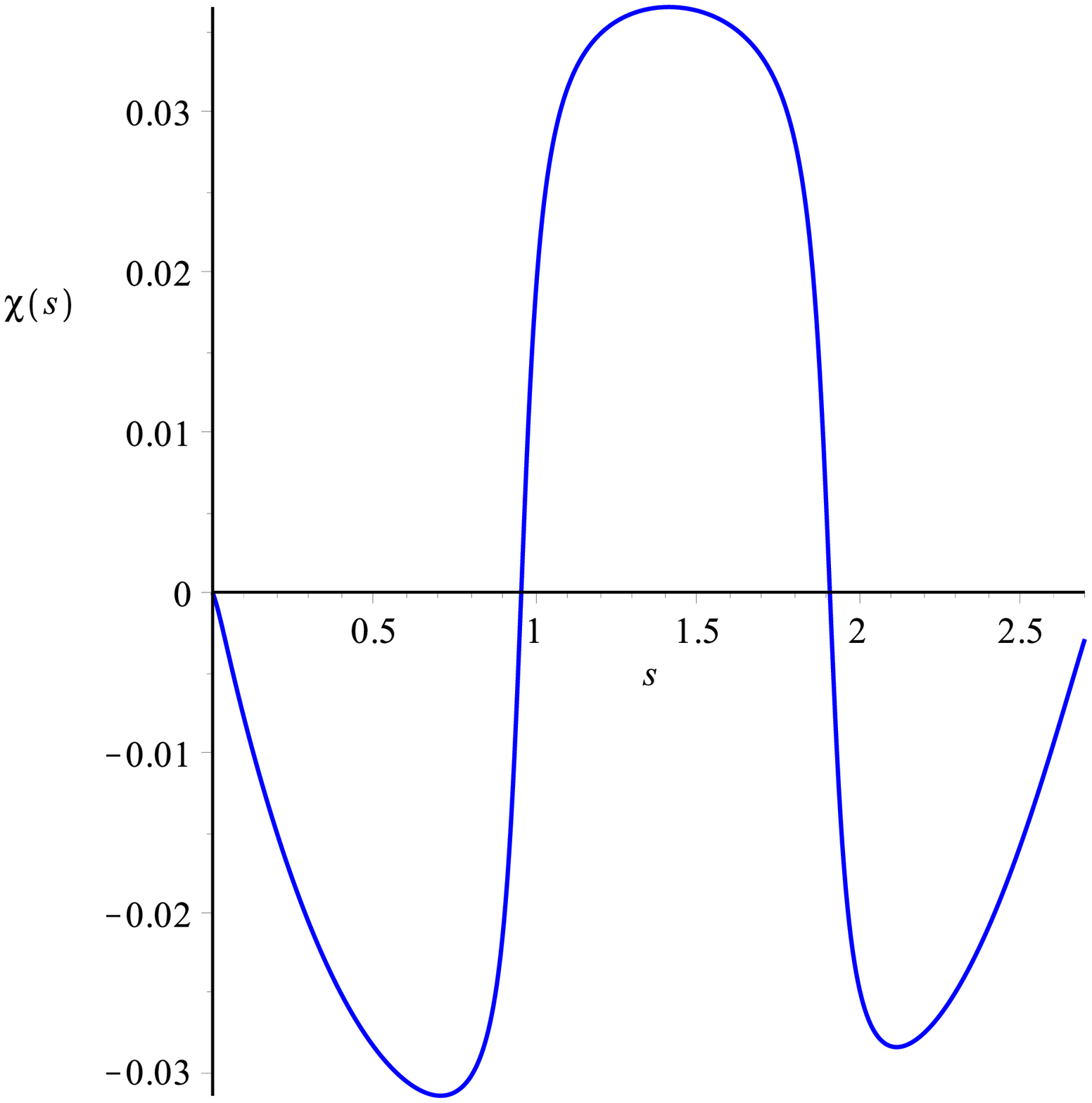}\\
\hline
\end{tabular}
\label{tab:gt}
\end{table}
\newpage
\begin{table}[ht]
\caption{A counter-rotating orbit with $\Phi=-6, \kappa=60$ and initial data $r(0)=20, \theta(0)=1.57$, and  $\phi(0)=0$.}
\centering
\begin{tabular}{c}
\hline
  (a) The orbit in polar coordinates $(x=r\cos\phi,y=r\sin\phi).$\\
\includegraphics[scale=.32]{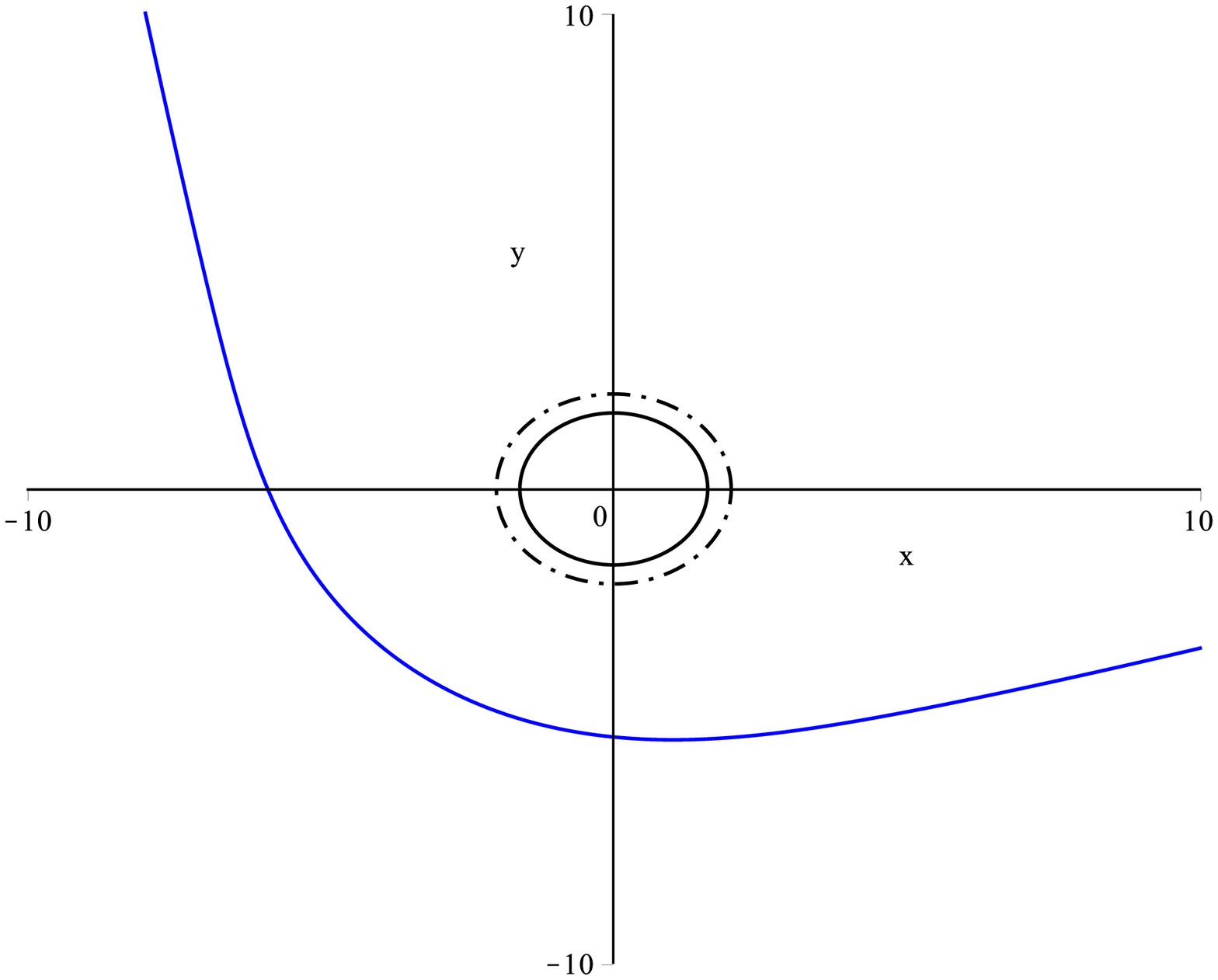}\\
\hline
  (b) The orbit in 3D spherical coordinates $\left(x=r\cos\phi\sin\theta, y=r\sin\phi\sin\theta, z=r\cos\theta\right)$.  \\
\includegraphics[scale=.32]{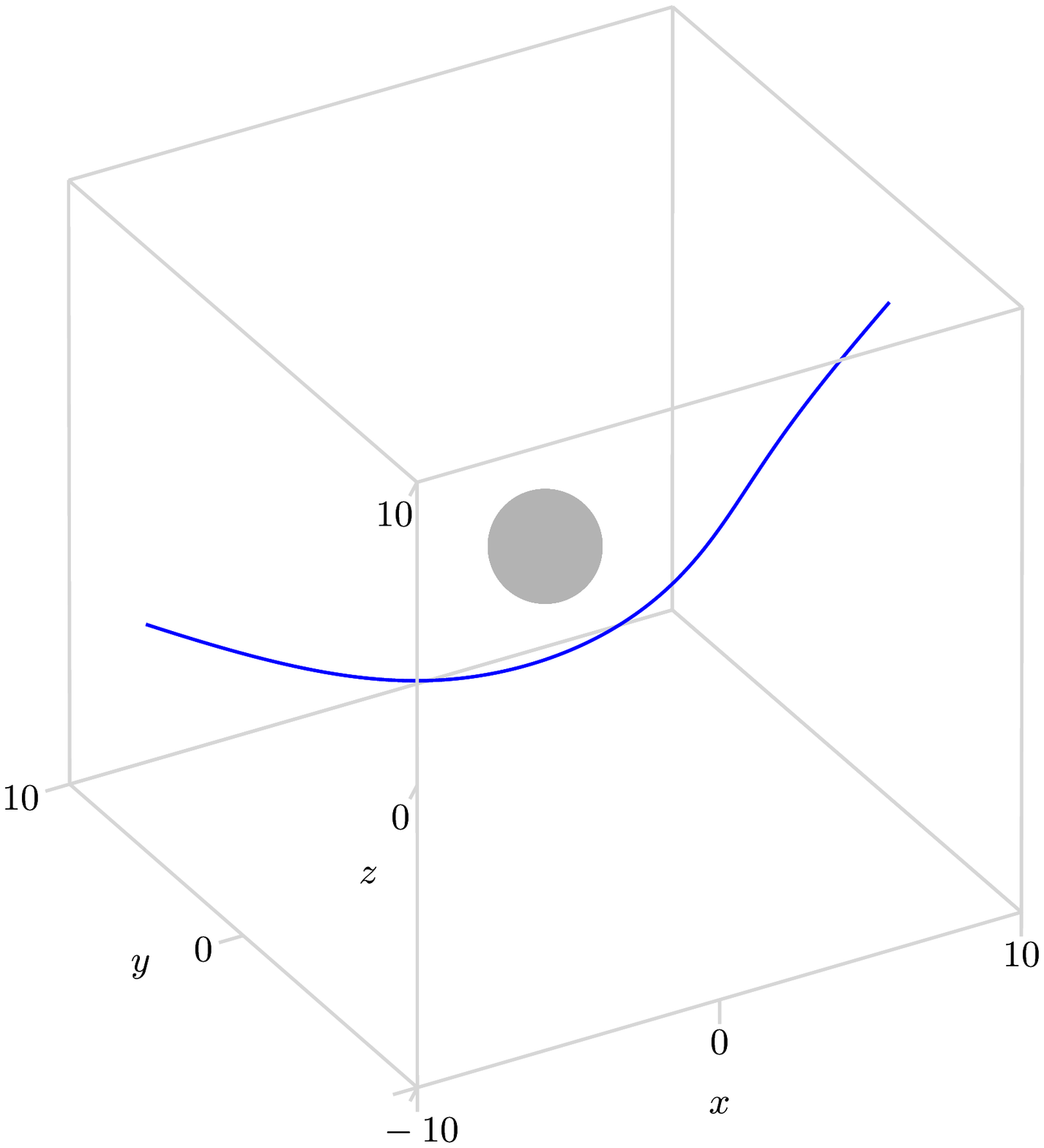}\\
\hline
  (c) The Faraday rotation angle as a function of the affine parameter $s$\\
\includegraphics[scale=.28]{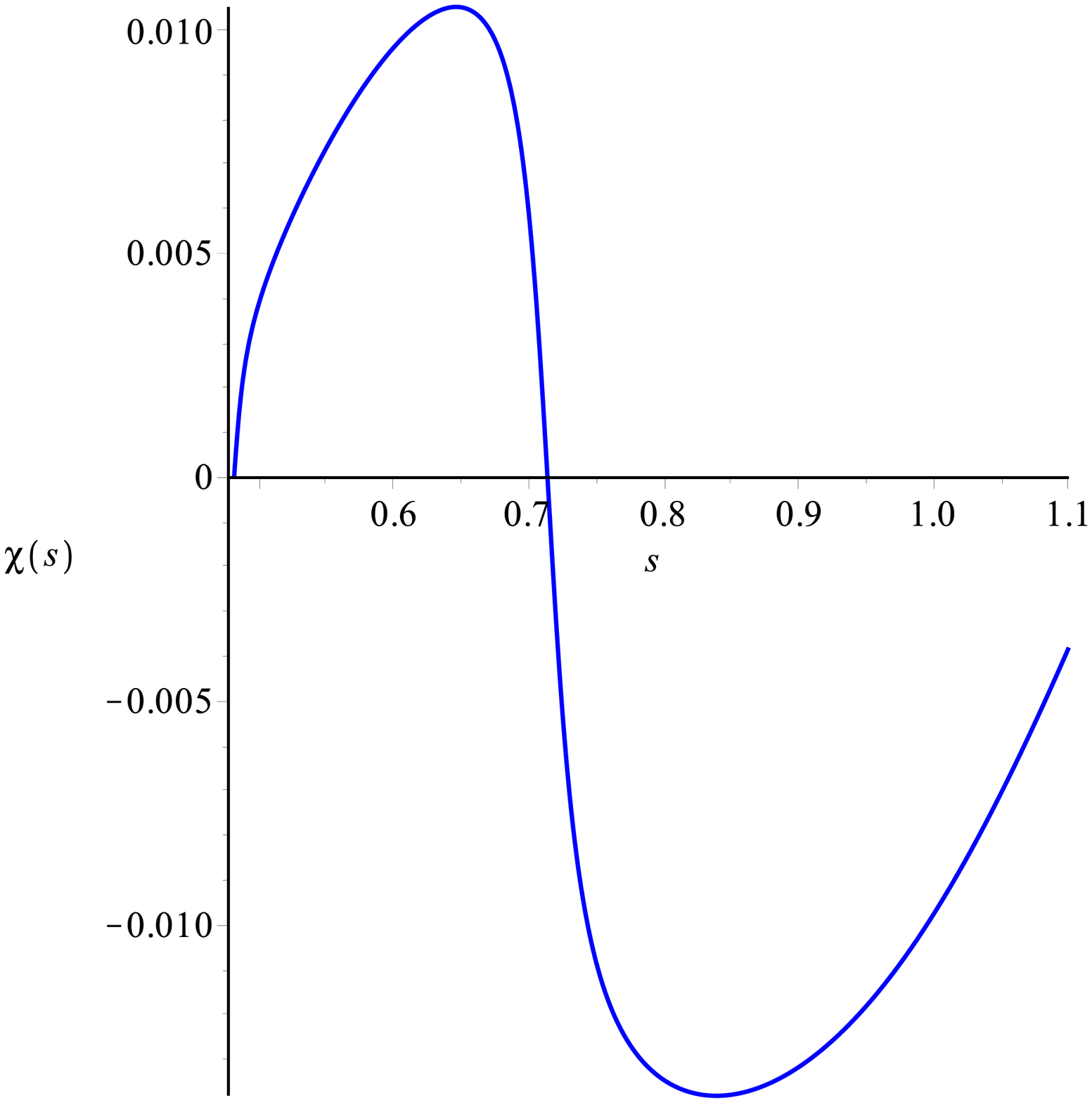}\\
\hline
\end{tabular}
\label{tab:gt}
\end{table}

\end{document}